\documentclass[a4paper, 11pt]{article}
\usepackage[warn]{mathtext}
\usepackage[T2A]{fontenc}
\usepackage[cp1251]{inputenc}
\usepackage[english]{babel}
\usepackage[bindingoffset=0mm,dvips]{geometry}

\usepackage{amsmath}
\usepackage{amssymb}
\usepackage{amsthm}

\usepackage{euscript}

\frenchspacing

\setcounter{tocdepth}{3}

\newcommand{\df}{\emph}

\newcommand{\B}{\mathfrak{B}}
\newcommand{\F}{\EuScript{F}}
\newcommand{\G}{\EuScript{G}}
\newcommand{\K}{\mathcal{C}}
\newcommand{\Bisect}{\mathcal{S}}
\newcommand{\A}{\mathcal{A}}

\newtheorem{theorem}{Theorem}
\newtheorem{lemma}[theorem]{Lemma}
\newtheorem{claim}[theorem]{Claim}
\newtheorem{cor}[theorem]{Corollary}

\newtheorem{Theorem}{Theorem}
\newtheorem{Prop}[Theorem]{Proposition}

\oddsidemargin=0mm

\topmargin=0mm
\headheight=0mm
\headsep=0mm

\textwidth=6.07in
\textheight=9.27in

\begin{document}

\renewcommand{\refname}{}
\renewcommand{\contentsname}{}

\title{Learning Read-Once Functions\\Using Subcube Identity Queries}
\author{%
Dmitry V. Chistikov \\ \texttt{dd1email@gmail.com}%
\and
Andrey A. Voronenko \\ \texttt{dm6@cs.msu.su}}
\date{Lomonosov Moscow State University\\Faculty of Computational Mathematics and Cybernetics}

\maketitle

\vspace*{-4ex}

\begin{abstract}
We consider the problem of exact identification
for read-once functions over arbitrary Boolean bases.
We introduce a new type of queries (\df{subcube identity} ones),
discuss its connection to previously known ones, and study the complexity
of the problem in question. Besides these new queries, learning algorithms
are allowed to use classic membership ones. We present a technique of modeling
an equivalence query with a polynomial number of membership
and subcube identity ones, thus establishing (under certain conditions)
a polynomial upper bound on the complexity of the problem.
We show that in some circumstances, though, equivalence queries
cannot be modeled with a polynomial number of subcube identity and membership ones.
We construct an example of an infinite Boolean basis with an exponential lower bound
on the number of membership and subcube identity queries required for
exact identification. We prove that for any finite subset of this basis,
the problem remains polynomial.
\end{abstract}

\vspace*{-4ex}

\tableofcontents

\newpage

\section{Introduction}
\label{s:intro}

Imagine a black box with an unknown Boolean function $f$
of variables $X = \{x_1, \ldots, x_n\}$ hidden inside.
Suppose that one has an opportunity to obtain
correct answers to questions of two types:
\begin{itemize}
\item[$(\imath)$]
if all of the variables from $X$ are assigned specific values,
i.\,e., $x_i = a_i$ for all $x_i \in X$,
what value does $f$ have?
\item[$(\imath\imath)$]
if some of the variables from $X$ are assigned specific values,
i.\,e., $x_i = a_i$ for $x_i \in X' \subseteq X$,
is the value of $f$ determined unambiguously?
\end{itemize}
How many questions does one have to ask in order to identify the function
in the box exactly? Clearly, if there is no prior knowledge of $f$,
one cannot do better than ask $2^n$ questions in the worst case. Indeed,
at the beginning the set of all possibilities consists of $2^{2^n}$ functions.
Each question's answer is a single bit, so the height of a (binary)
deterministic decision tree representing one's strategy cannot be less
than $\log_2 2^{2^n} = 2^n$. However, if one knows a priori that $f$
belongs to a certain class $\K$, the problem can become easier.
A counting argument here gives the lower bound of $\log_2\!|\K|$.

In this paper, we consider classes $\K$ of Boolean functions which are
read-once over various bases $\B$ (formal definitions are given
in section~\ref{s:pre}). While questions of type $(\imath)$
(\df{membership queries}) are fairly common for various learning problems
(several settings for read-once functions are discussed
in section~\ref{s:learn}), questions of type $(\imath\imath)$
(\df{subcube identity queries}) appear to have never been considered by researchers
yet. In section~\ref{s:id}, we introduce this new type of queries
formally, define our \df{learning} model in detail
and study the complexity of the considered problem
(one of \df{exact identification}).

Subsection~\ref{ss:def} is devoted to definitions and problem setting.
In subsection~\ref{ss:idmod}, we discuss a connection between
subcube identity queries in our learning model
and two of Valiant's classic \df{necessity} and \df{possibility} queries.
We show that any algorithm using membership and subcube identity queries
can be transformed into an algorithm using necessity and possibility queries,
and vice versa. We also discuss possibility of using polynomial modeling
techniques for another classic type of queries, namely, Angluin's
\df{equivalence} ones. We demonstrate that subcube identity queries
cannot be modeled with a polynomial number of equivalence ones.
In subsection~\ref{ss:fin}, we use membership and subcube identity
queries to simulate an equivalence query with a polynomial overhead.
We show that the considered problem of
exact identification for read-once functions over finite Boolean bases
from a wide class can be solved with a polynomial number of questions
of type $(\imath)$ and $(\imath\imath)$. We also demonstrate that
if a related problem of \df{checking} read-once functions
can be solved polynomially for a finite basis, then
so does the considered problem of exact identification with
membership and subcube identity queries.

In subsection~\ref{ss:inf}, we compare Angluin's learning model,
which uses membership and equivalence queries, with our model,
which uses membership and subcube identity ones. We provide an example
of an infinite Boolean basis and show that an equivalence query
in Angluin's model for this basis can be ``exponentially more powerful''
than membership and subcube identity ones. More formally, we show
that a problem of identifying exactly an unknown function
from a certain set can be solved with a single equivalence query,
but requires an exponentially large number of membership
and subcube identity ones. This means that an equivalence query,
contrary to the results of the previous subsection,
cannot generally be modeled with a polynomial number of
membership and subcube identity ones. This fact also gives an example
of an infinite Boolean basis such that read-once functions
over this basis cannot be identified exactly with a polynomial
number of queries in our model. In subsection~\ref{ss:border},
we prove that for any finite subset of this basis,
this property does not hold and polynomial algorithms
for exact identification of read-once functions exist.

\section{Preliminaries}
\label{s:pre}

\subsection{Basic definitions}
\label{ss:basics}

Suppose $\B$ is a set of Boolean functions.
We shall call $\B$ a \df{basis} and use its functions
to construct formulae. A formula $\F$ over $\B$ is
\mbox{\df{read-once}} if every variable in~$\F$
appears exactly once. A Boolean function
is said to be \df{read-once over~$\B$}
if it can be expressed with a read-once formula over~$\B$.

Read-once functions over $\{\land, \lor, \neg\}$ are commonly
called ``read-once'' without specifying a basis.
Similarly, read-once functions over $\{\land, \lor\}$
are widely known as ``monotone read-once''.
In this paper, though, we shall not use any of these terms.

Suppose $f$ is a Boolean function of variables $X = \{x_1, \ldots, x_n\}$.
A variable $x_i \in X$ is \df{essential} for $f$ iff
there exist two vectors $a$ and $b$ differing only in $i$th component
such that $f(a) \ne f(b)$. All variables that are not essential
are called \df{fictious}.

A \df{partial assignment} $p$ to variables $X$ is
a mapping from $X$ to $\{0, 1, *\}$. We call an assignment
\df{total} iff it takes all variables from $X$ to $\{0, 1\}$
(such an assignment is usually identified with a bit vector of length $|X|$).
A \df{total extension} of a partial assignment $p$ is any
total assignment $a$ such that $p$ and $a$ disagree only
on variables from $p^{-1}(*)$.

Let $f$ be a Boolean function and $p$ a partial assignment
to its variables.
Denote by $f_p$ a \df{projection} function obtained by ``hardwiring''
the values assigned by $p$ to the corresponding inputs
(whenever $p$ takes $x_i$ to $*$, the corresponding input
is left untouched).
In other words, $f_p$ is a function
of variables $X' = p^{-1}(*)$, its domain comprising
exactly $2^{|X'|}$ Boolean vectors of length $|X'|$.
The value of $f_p$ on an input vector $y$ is equal to $f(x)$,
where $x$ is obtained by extending $y$ with values from $p$.
We say that a projection $f_p$ is \df{induced} by
an assignment $p$.

\subsection{The problem of exact identification}
\label{ss:exact}

Consider the problem of \df{learning} described
by Valiant~\cite{val} and Angluin~\cite{a88}.
The goal of learning is \df{exact identification}:
given a black box with an unknown object from a known class $\K$,
one aims to determine which object is hidden in the box.
Knowing a priori that the object belongs to the class $\K$,
one can use \df{queries} to bring out its properties and,
ultimately, to identify it exactly. Queries are answered
by honest and accurate \df{oracles}.

In this paper, we are not interested in time complexity, but focus
our attention on the number of queries performed by algorithms
in the worst case. The algorithms, therefore, can be represented as
deterministic decision trees. Nevertheless, one can easily
check that all the algorithms run in polynomial time
in terms of $n$ (all our objects are Boolean functions
of variables $X = \{x_1, \ldots, x_n\}$ as described below,
so $n$ is the number of variables), when represented as Turing machines.

\section{Learning read-once functions}
\label{s:learn}

We consider a problem of learning
(identifying exactly, see subsection~\ref{ss:exact})
read-once Boolean functions.
This problem has been studied since paper~\cite{val}.
In the setting being considered, a basis $\B$ and
a set of variables $X = \{x_1, \ldots, x_n\}$ are known a priori.
The corresponding class $\K$ of objects being learnt is
a set of Boolean functions (all functions of variables $X$
which are read-once over $\B$), so the object in question
(the \df{target} function $f$)
can be regarded as an unknown concept or property.
This idea gave names to various types of queries,
suggested by Valiant and Angluin.

\subsection{Necessity and possibility queries}
\label{ss:val}

Valiant's approach to learning read-once functions~\cite{val}
suggested using three types of queries.
We shall describe only the first one (the second and the third ones
are related to the notion of Boolean functions' prime implicants).

A \df{necessity query} takes a single partial assignment $p$ as an input.
The result of the query is ``yes'' if $f_p \equiv 1$,
otherwise the result is ``no''.
Valiant also defined a \df{possibility query}, which is dual to
a necessity one. It also takes a single partial assignment $p$ and
returns ``yes'' iff $f_p \not \equiv 0$.

\subsection{Membership and equivalence queries}
\label{ss:ang}

Angluin's approach to learning~\cite{a88} introduced other types of queries.
We shall describe two of them.

\df{Membership queries} allow one to learn
the value of the target function $f$ on a given input.
Such a query takes an input $x$ (a total assignment to variables $X$)
and returns the corresponding value $f(x)$.

\df{Equivalence queries} allow one to determine
whether the target function can be exactly represented
with a given formula. The algorithm presents
a formula $\G$ representing a Boolean function $g$,
and the corresponding oracle determines whether $f$ is equivalent to $g$.
It either outputs ``yes'' or gives a counterexample $y$
such that $f(y) \ne g(y)$. We consider
only \df{proper} equivalence queries, i.\,e. ones restricted to
functions $g$ from the class $\K$ (a more liberal setting could
allow the use of an arbitrary Boolean function here).

One of the major early results in the area of learning read-once functions belongs to
Angluin, Hellerstein and Karpinski. In paper~\cite{ahk}
they describe an algorithm solving the problem
for the basis of conjunction, disjunction and negation,
using $O(n^3)$ membership and $O(n)$ equivalence queries.
Here $n$ is the number of variables, i.\,e., the cardinality of $X$.
In this paper, we shall always measure the number of queries
performed by an algorithm as a function of $n$.

An early generalization~\cite{ahk} of
Angluin, Hellerstein and Karpinski's result
allows the basis $\B$ to contain
arbitrary symmetric threshold functions.
A \df{threshold function} is a one satisfying the condition
\begin{equation*}
f(x_1, \ldots, x_n) = 1
\Leftrightarrow
\alpha_1 x_1 + \ldots + \alpha_n x_n \ge \alpha_0
\end{equation*}
for some real numbers $\alpha_0, \alpha_1, \ldots, \alpha_n$.
If none of $\alpha_1, \ldots, \alpha_n$ is negative,
then $f$ is clearly monotone; if $\alpha_1 = \ldots = \alpha_n$,
then $f$ is symmetric.
The following theorem belongs to
Bshouty, Hancock, Hellerstein and Karpinski~\cite{bhhkthresh}.

\begin{Theorem}
Read-once functions over
the basis of arbitrary symmetric threshold functions
are exactly identifiable
with $O(n^4)$ membership and $O(n)$ equivalence queries.
\end{Theorem}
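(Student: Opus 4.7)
The plan is to extend the Angluin--Hellerstein--Karpinski template for learning read-once $\{\land, \lor, \neg\}$-formulas to the richer basis of symmetric threshold functions. A read-once formula $\F$ over this basis is a tree whose internal nodes are labelled by pairs $(m, k)$ (the arity and the threshold) and whose leaves are variables; up to permuting children at each gate, this tree is uniquely determined by the function it computes once fictitious variables have been discarded. The algorithm I have in mind reconstructs this tree bottom-up, alternating membership queries to recover local structure with occasional equivalence queries to certify and correct global structure.

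The first step is, for each pair of essential variables $x_i, x_j$, to decide whether they share a parent gate. Since every gate is symmetric, two siblings are indistinguishable under swaps in every projection of $f$, while non-siblings can be separated by some projection. This test is implementable by fixing the remaining variables to suitably chosen partial assignments and comparing the resulting values of $f$, using $O(n)$ membership queries per pair, i.\,e., $O(n^3)$ per level of the tree. Once the variables are partitioned into sibling groups, the threshold $k$ of each gate is read off by evaluating $f$ on assignments that feed the gate exactly $0, 1, \ldots, m$ ones while forcing the rest of $\F$ into a state in which the gate's output propagates to the root unchanged; such a ``neutralizing'' assignment exists for a read-once formula and can be located from the data already collected. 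This adds $O(n)$ queries per gate, i.\,e., $O(n^2)$ per level.

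Having identified all gates at the current deepest level, I would collapse each sibling group into a single new meta-variable representing its gate's output and recurse on the reduced instance. Since the depth of $\F$ is at most $n$ and the work amortizes to an overall $O(n^4)$ membership queries, the stated budget is met. Equivalence queries enter exactly once per level: after the next level is conjectured, a single equivalence query is issued, and a returned counterexample directly exposes either a pair of variables misclassified as siblings or a mis-assigned threshold. Each such counterexample can be processed using a further $O(n^2)$ membership queries, and an $O(n)$ bound on equivalence queries follows because every counterexample strictly refines the current partition.

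The main obstacle I expect is the subtle non-uniqueness in threshold representations: the same projection of $f$ can be realized by more than one pair $(m, k)$ when some inputs happen to be forced constant by the remainder of $\F$. Handling this requires pruning fictitious variables at each level before applying the sibling test, and using the specific neutralizing assignment whose existence is guaranteed by the read-once property when recovering thresholds. Verifying that every counterexample returned by the equivalence oracle can be converted, in polynomially many membership queries, into a strict structural refinement of the current hypothesis is the technical heart of the argument.
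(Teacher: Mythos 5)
First, note that the paper does not prove this statement at all: it is Theorem~I, imported verbatim from Bshouty, Hancock, Hellerstein and Karpinski~\cite{bhhkthresh}, so a self-contained proof would have to reconstruct essentially their algorithm. Your sketch outlines a plausible architecture (bottom-up gate discovery, meta-variables, equivalence queries as certifiers), but as written it has genuine gaps at exactly the points where the known argument is hard. The sibling test is the first: the claim that two variables share a parent gate iff they are ``indistinguishable under swaps'' is not established (swap-invariance concerns the lowest common ancestor and the shape of the two subtrees, not parenthood), and, more seriously, you assert it can be decided with $O(n)$ membership queries per pair without saying which assignments to query. Deciding swap-invariance, or exhibiting a distinguishing assignment, is a search over the $2^{n-2}$ settings of the remaining variables; doing it with few membership queries amounts to finding justifying/sensitive assignments, which is precisely the task for which the counterexamples of equivalence queries are needed in \cite{ahk} and \cite{bhhkthresh}. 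The same objection applies to your ``neutralizing'' assignments for reading off thresholds: making a gate's output propagate to the root requires putting every ancestor gate exactly at its threshold boundary, i.e., it presupposes knowledge of the structure above the gate, which a bottom-up pass has not yet learned; the bare statement that such an assignment ``can be located from the data already collected'' is the missing argument, not a consequence of it.

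Second, the query accounting does not stand on its own. The $O(n)$ bound on equivalence queries rests on the claim that every counterexample ``strictly refines the current partition,'' but a counterexample may instead reveal a wrong threshold, a missed essential variable, or an error several levels below the current one, and nothing in the sketch converts it into a refinement step; likewise the handling of fictitious variables (``pruning at each level'') is named but not given a procedure, even though detecting essentiality with membership queries alone is nontrivial for threshold-type gates --- compare the adversary argument of subsection~\ref{ss:inf}, which exploits exactly the difficulty of locating the few sensitive points of a near-majority function. So the proposal is a reasonable plan in the spirit of \cite{ahk} and \cite{bhhkthresh}, but the sibling identification, the threshold recovery, and the counterexample-processing step --- which you yourself flag as ``the technical heart'' --- are all asserted rather than proved, and these are the substance of the theorem.
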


Further research in this area
culminated in the following theorem
due to Bshouty, Hancock and Hellerstein~\cite{bhhgen}.

\begin{Theorem}
\label{Th:diagbl}
Read-once functions over the basis of
arbitrary constant $l$ fan-in functions
are exactly identifiable
with $O(n^{l + 2})$ membership and $n$ equivalence queries.
\end{Theorem}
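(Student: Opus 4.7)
The plan is to generalize the Angluin--Hellerstein--Karpinski scheme from the basis $\{\land,\lor,\neg\}$ to arbitrary bases of fan-in at most $l$. I would maintain a current hypothesis formula $\G$, always read-once over $\B$; on each outer iteration, one equivalence query is issued. If the oracle returns ``yes'', we are done; otherwise it returns a counterexample $y$ which is used, together with $O(n^{l+1})$ membership queries, to strictly refine $\G$. Since a read-once formula over $n$ variables has at most $n-1$ internal gates, at most $n$ such refinements can occur, which yields the two announced bounds after multiplication.

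The heart of the argument is the \emph{refinement step}. Given a counterexample $y$ with $f(y)\ne g(y)$, I would first descend through the current hypothesis tree, using projections controlled by $y$ and by membership queries, to localize a subformula $T$ of $\G$ whose induced projection disagrees with the corresponding projection of $f$. The read-once property of $f$ guarantees that the variables occurring in $T$ are disjoint from those outside it, so fixing the outside variables in accordance with $y$ (and adjusting via short binary searches) reduces the problem to identifying the correct top-level gate of $T$.

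To identify this top-level gate, one needs both its $k\le l$ input wires, chosen from at most $n$ currently available wires, and the $l$-ary Boolean function it computes. I would enumerate all $O(n^l)$ candidate wire tuples, and for each apply a \emph{sibling test}: a constant-size batch of membership queries on assignments that fix the surrounding context and realise each of the $2^l$ input patterns on the candidate wires. Constructing the context assignment for each candidate costs $O(n)$ further membership queries, so one refinement step uses $O(n^{l+1})$ membership queries and the running total over the whole algorithm is $O(n^{l+2})$.

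The main obstacle will be proving correctness of the sibling test for general $l$-ary bases. In the binary AND/OR/NOT setting, the algebraic structure of $\land,\lor$ makes siblings easy to detect; for arbitrary $l$-ary gates one must show that read-onceness yields, for every true sibling tuple, a context in which the gate's truth table is uniquely extracted by $2^l=O(1)$ membership queries, while for non-sibling tuples no such consistent extraction exists. Subtleties involving gates with fictitious variables, symmetry among the $l$ inputs, and verifying that each refinement genuinely prunes the hypothesis space will require careful bookkeeping; this is where I expect the bulk of the work to lie.
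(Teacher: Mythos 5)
First, note that the paper does not prove this statement at all: it is quoted as a known result of Bshouty, Hancock and Hellerstein~\cite{bhhgen}, so the only meaningful comparison is with that cited work, whose argument is substantially more involved than your outline.

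Your proposal is a plan rather than a proof, and the part you defer (``correctness of the sibling test for general $l$-ary bases'') is precisely the mathematical core of the theorem, not a bookkeeping detail. Two concrete difficulties show why the deferred step cannot be waved through. First, for arbitrary bases of fan-in $l$ the ``top-level gate of $T$'' is not a well-defined object: read-once formulae over such bases are far from unique (an $l$-ary gate may decompose into smaller gates, gates may have fictitious inputs, and the same function admits structurally different read-once representations), so an algorithm cannot aim at recovering ``the'' gate and its wire tuple; it must work with a canonical decomposition of the \emph{function} (in~\cite{bhhgen} this is done via a carefully developed structure theory, with projections onto small variable sets and ``justifying'' assignments that guarantee the restricted function really exposes the gate rather than a collapsed projection). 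Your sibling test presumes that fixing the outside context according to $y$ makes the candidate gate's truth table visible, but under an unlucky context the projection can be constant or degenerate, and nothing in your sketch shows a suitable context exists or can be found with $O(n)$ membership queries. Second, your progress measure is asserted, not proved: you claim each counterexample ``strictly refines'' the hypothesis and that $n$ refinements suffice because a read-once formula has at most $n-1$ internal gates, but you never exhibit a quantity that provably increases with each equivalence query (in~\cite{bhhgen} the bound of $n$ equivalence queries comes from an argument that each counterexample forces the algorithm to correctly incorporate at least one new variable, which requires its own proof). Also, the statement that ``the read-once property of $f$ guarantees that the variables occurring in $T$ are disjoint from those outside it'' conflates the unknown target $f$ with the hypothesis $g$: the partition of variables you use comes from $g$'s tree, and you have no a priori guarantee that $f$ respects it. Until these points are filled in, the proposal does not establish the theorem.
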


\subsection{Subcube parity queries}
\label{ss:subcube}

Paper~\cite{vch0112} suggested studying a related problem
of learning read-once functions with no fictious variables
using subcube queries. The main goal is exact identification
as described in subsection~\ref{ss:exact}, but in this case all essential variables
of the target function are also considered to be known a priori. In the setting
considered in~\cite{vch0112}, a learning algorithm can use
membership queries defined in subsection~\ref{ss:ang} and
\df{subcube parity queries}, which are defined as follows.

Suppose $f$ is an unknown target function and
$X = \{x_1, \ldots, x_n\}$ is the set of all its essential variables.
A \df{subcube parity query} takes a partial assignment $p$ as an input
and yields the parity (sum modulo $2$)
of all values of the induced projection $f_p$ on
its $2^{|X'|}$ possible inputs (here $X' = p^{-1}(*)$).
The term ``subcube parity'' is determined by the observation that
the values $f(x)$ of the target function are summarized over
an $|X'|$-dimensional subcube of the Boolean hypercube $\{0, 1\}^{|X|}$.
This subcube is \df{restricted} by $p$
and consists of all possible inputs for $f_p$.
Note that a membership query is a particular case of
a subcube parity query (for a total assignment $p$).

A basis $\B$ is called \df{projection closed} if any projection
of a function from $\B$ also lies in $\B$. We shall call $\B$
\df{complex} if it is projection closed and contains
conjunction, disjunction and negation functions. For complex bases,
the following criterion determining the power of
subcube parity queries holds true~\cite{vch0112}:

\begin{Theorem}
Suppose $\B$ is a complex basis. Then read-once functions of variables
$X = \{x_1, \ldots, x_n\}$ over $\B$ are exactly identifiable
with a polynomial number of subcube parity queries iff all functions from $\B$
are read-once over the basis $\{\land, \lor, \neg\}$.
If this is the case, there exists an algorithm using $n^2 - n + 1$ queries,
otherwise an exponential number of queries is necessary for exact identification.
\end{Theorem}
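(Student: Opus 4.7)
The statement splits into an upper bound ($n^2-n+1$ queries suffice when all functions in $\B$ are read-once over $\{\land,\lor,\neg\}$) and a matching exponential lower bound in the converse case. I would handle the two directions separately, exploiting in both cases the fact that the answer to a subcube parity query is an $\mathbb{F}_2$-linear functional of the truth table of $f$.

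For the upper bound, I would first observe that the hypothesis lets one reduce to reconstructing a read-once formula over $\{\land,\lor,\neg\}$: replacing every gate of a formula over $\B$ by its own read-once expression over the standard basis yields a read-once formula over $\{\land,\lor,\neg\}$, so the target $f$ has a canonical formula tree (unique up to commutativity of siblings and placement of negations at the leaves). The plan is to recover this tree from one global query returning the parity of $f$ over the whole cube, together with, for every ordered pair $(i,j)$, one query in which $x_i$ and $x_j$ remain free while the other variables are pinned to suitably chosen constants. The total is $n(n-1)+1 = n^2-n+1$. Each pairwise query determines which of the few possible two-variable projections of a read-once function over $\{\land,\lor,\neg\}$ occurs (constant, literal, restricted AND, or restricted OR), and the resulting pairwise data --- ``variables $x_i$ and $x_j$ meet at a gate of such-and-such type'' --- is known to determine a read-once formula uniquely, by the standard combinatorial reconstruction from pairwise gate-types.

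For the lower bound, assume $\B$ contains some $\phi$ that is not read-once over $\{\land,\lor,\neg\}$. Because subcube parity answers are $\mathbb{F}_2$-linear in the truth table of $f$, two targets whose truth tables agree on every subcube sum yield identical transcripts on every algorithm. I would therefore aim to exhibit a family $\mathcal{H}$ of $2^{\Omega(n)}$ distinct read-once functions over $\B$ on $n$ variables such that for all $f,g\in\mathcal{H}$ the difference $f\oplus g$ sums to $0$ over every subcube, even though $f\ne g$ as functions. A standard adversary argument then forces any deterministic decision tree to have depth $\Omega(\log|\mathcal{H}|) = \Omega(n)$; amplifying this family via independent blocks of $O(\log n)$ variables yields the exponential bound. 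The construction of $\mathcal{H}$ proceeds from $\phi$: using projection closure and the presence of $\land,\lor,\neg$ in $\B$, one extracts from $\phi$ a nontrivial ``XOR-like'' substructure (some projection $\phi_p$ depends on a pair of variables through a parity, since otherwise every projection of $\phi$ would be read-once over $\{\land,\lor,\neg\}$ and then $\phi$ itself would be too), and this substructure supplies the needed subcubewise-cancelling perturbations.

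The main obstacle is the lower bound: the upper bound is essentially a transcription of the standard pairwise-reconstruction algorithm for read-once formulas, but the lower bound requires translating the hypothesis ``$\phi$ not read-once over $\{\land,\lor,\neg\}$'' into the existence of a concrete XOR-like gadget that can be planted into exponentially many inequivalent read-once formulas over $\B$ without altering any subcube sum. Once such a gadget is in hand, the $\mathbb{F}_2$-linearity of the query model makes the adversary argument routine.
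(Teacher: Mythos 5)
First, a point of comparison: the paper does not prove this theorem at all --- it is quoted from~\cite{vch0112} as a known result --- so there is no in-paper proof to measure your route against; your proposal has to stand on its own, and it has genuine gaps in both directions. The fatal one is in the lower bound. You want $2^{\Omega(n)}$ distinct read-once functions over $\B$ such that $f \oplus g$ sums to $0$ over \emph{every} subcube for all pairs $f, g$ in the family, so that all transcripts coincide. No two distinct functions can satisfy this: a $0$-dimensional subcube sum is simply the value of the function at a point (the paper itself notes that a membership query is the special case of a subcube parity query given by a total assignment), so agreement on all subcube sums forces $f = g$. Perfect indistinguishability is therefore unattainable, and the argument must instead be an adversary or counting argument in which any single query can eliminate only a small fraction of an exponentially large candidate family (in the spirit of the lemma about $\K_n$ proved in this paper for subcube identity queries). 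Your proposed gadget extraction is also unjustified: from ``$\phi$ is not read-once over $\{\land,\lor,\neg\}$'' it does not follow that some projection of $\phi$ is a two-variable parity. The majority function $x_1x_2 \lor x_1x_3 \lor x_2x_3$, or the multiplexer $x_1x_2 \lor \overline{x}_1x_3$, is not read-once over $\{\land,\lor,\neg\}$, yet every proper projection of it is read-once and none is XOR-like; the dichotomy ``either a parity projection exists or $\phi$ is read-once'' is false, so the hard family cannot be built the way you describe.

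The upper bound is also not a routine transcription. A subcube parity query returns a single bit, so one query with $x_i, x_j$ free cannot ``determine which of the few possible two-variable projections occurs'': it can at best separate projections with an odd number of ones (the $\land/\lor$ type) from those with an even number (constant or a single literal), and it reveals neither the gate type nor the literal signs, both of which the reconstruction needs. Moreover the ``suitably chosen constants'' pinning the remaining variables must be chosen without knowing $f$: whether a pinning keeps $x_i$ and $x_j$ essential depends on the unknown signs along the tree, so you need either adaptivity or differently designed queries, and the count $n(n-1)+1$ must survive that. Finally, your one global query is wasted: a read-once function over $\{\land,\lor,\neg\}$ all of whose $n$ variables are essential always has an odd number of ones (ones-counts multiply under $\land$ and, modulo $2$, under $\lor$ on disjoint nonempty variable sets, and a literal has exactly one satisfying point), so the full-cube parity is identically $1$ and carries no information; the extra query is needed instead to break the global negation/duality ambiguity that pairwise structural data cannot resolve --- compare the reconstruction in the Appendix, which uses the value of $f$ at a single input for exactly this purpose.
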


\subsection{The problem of checking}
\label{ss:check}

We also need several results of research in a related area,
that of a \df{checking} problem. Suppose $\K$ is a known class
of objects, and one is given a black box
with an unknown object from $\K$. One is also given
a hypothesis that the box contains a certain object $f \in \K$.
One's task is to check whether this hypothesis is true or false.
The class $\K$, object $f$ and available queries all depend
on a specific setting. Note that the order of queries asked by
an algorithm is not important in the checking problem: the task
of the algorithm is simply to check whether all the
answers are correct. Any such algorithm $\A$, therefore, can be
represented by a \df{checking test}
$T_{\A} = \{ \langle q, q(f) \rangle \colon \A \text{\ performs query\ }q\}$,
where $q(f)$ is the result of $q$ when addressed to $f$.
One can see that $T_{\A}$ is simply
a table of input queries and their return values for $f$.

The problem of checking for read-once functions was set up
in paper~\cite{aavmvk11}. The considered class of objects
consists of all read-once functions of variables
$X = \{x_1, \ldots, x_n\}$ over an arbitrary basis $\B$,
and a target function $f$ is known to depend essentially on
all the variables from $X$. The only available queries are
membership ones. A checking test is a set
$T_{\A} = \{ \langle x, f(x) \rangle \colon \A \text{\ asks the value on\ }x\}$.
One may also identify a checking test with a set of inputs
contained in it.

This problem has been studied for various bases, both for
individual functions and in a ``uniform'' setting (determining
the number of queries sufficient for checking any
read-once function of $n$ variables).
For arbitrary finite bases of functions of fan-in at most $l$,
the following approach was suggested.

Take a target read-once function $f$ of variables $x_1, \ldots, x_n$
(as stated above, all the variables are known to be essential).
Let $X'$ be a subset of $X = \{x_1, \ldots, x_n\}$ of size $l$.
Suppose there exists a partial assignment $p$ such that
$p(x_i) = *$ iff $x_i \in X'$ and the projection $f_p$
depends essentially on all variables from $X'$.
In this case the set of all total assignments $a$ extending $p$
is called an \df{essentiality hypercube} for $f$ satisfying
the set of variables $X'$. An \df{$l$-essentiality hypercube set}
for $f$ is any set $H_f$ containing essentiality hypercubes
satisfying every subset $X' \subseteq X$ of size $l$,
whenever this is possible. If for a certain subset $X'$
such a hypercube does not exist, no restriction is imposed on $H_f$.
If an $l$-essentiality hypercube set for $f$
contains essentiality hypercubes for all $\binom{n}{l}$
of $l$-sized subsets of $X$, the target function $f$ is called
\df{$l$-satisfiable}.

Now denote by $B_l$ the basis of all functions of fan-in at most $l$.
The following theorem is proved in~\cite{aavpmi23}:

\begin{Theorem}
Suppose $l$ is an arbitrary natural number, $l \ge 2$.
Let $f$ be an $l$-satisfiable read-once function over $B_l$
and $H_f$ its $l$-essentiality hypercube set.
Then the values of $f$ on vectors from $H_f$
constitute a checking test for $f$ in the basis $B_l$.
\end{Theorem}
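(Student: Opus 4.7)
The plan is to show that any read-once function $g$ over $B_l$ on the same variables $X$, with all variables essential, that agrees with $f$ on every vector of $H_f$ must satisfy $g \equiv f$. I would proceed by induction on $n = |X|$.

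The first observation is local. Fix an $l$-sized subset $X' \subseteq X$. By $l$-satisfiability of $f$, there is a partial assignment $p_{X'}$ with $p_{X'}^{-1}(*) = X'$ inducing an essentiality hypercube for $f$ on $X'$, and this hypercube lies in $H_f$. Since $g$ and $f$ agree on every vector of the hypercube, the projections coincide: $f_{p_{X'}} = g_{p_{X'}}$ as Boolean functions of the $l$ variables in $X'$. In particular, $g_{p_{X'}}$ is also essentially dependent on every variable in $X'$, so the ``local picture'' of $g$ matches that of $f$ on every $l$-subset.

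For the base case $n \le l$ the claim is immediate: the whole Boolean cube is an essentiality hypercube (taking $X' = X$), so $f$ and $g$ agree everywhere. For the inductive step $n > l$, I would exploit the tree structure of a read-once formula for $f$ over a finite fan-in basis. Choose a ``deepest'' gate $h$ in this tree, i.\,e., one all of whose inputs are variables, and call this input set $Y = \{x_{i_1}, \ldots, x_{i_k}\}$, with $k \le l$. Extend $Y$ to an $l$-subset $X' \supseteq Y$ by adjoining $l-k$ further variables, chosen so that under $p_{X'}$ the subformula rooted at $h$ acts essentially independently of the others. The equality $f_{p_{X'}} = g_{p_{X'}}$ then forces the read-once formula for $g$ to contain the same subformula $h$ on $Y$. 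I would then contract this subformula to a single fresh variable $y$ in both formulae, obtaining read-once functions $\tilde f$, $\tilde g$ over $B_l$ on $n - k + 1$ variables, check that $\tilde f$ remains $l$-satisfiable and that an $l$-essentiality hypercube set for $\tilde f$ with correct values can be recovered from $H_f$, and finally invoke the induction hypothesis to conclude $\tilde f \equiv \tilde g$, whence $f \equiv g$.

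The main obstacle is justifying the inductive step, specifically that the subformula rooted at the bottom gate $h$ is uniquely recoverable from the single projection $f_{p_{X'}}$. This is a uniqueness statement for read-once decompositions restricted to bounded fan-in bases. The delicate point is that when several distinct bottom gates contribute inputs to $X'$, their contributions must be disentangled; picking $X'$ so that it contains the inputs of only one such gate together with carefully placed ``probe'' variables from further up in the tree should make the structure of $h$ directly visible in $f_{p_{X'}}$. The accompanying bookkeeping for the contracted function $\tilde f$ --- ensuring that contracting the $Y$-subformula preserves $l$-satisfiability and that the associated hypercube set is consistent with $H_f$ --- is technical but routine once the structural recovery step is in hand.
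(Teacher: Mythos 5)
The paper does not prove this theorem in the main text (it is imported from the cited reference), but the Appendix gives the full argument for $l=2$, and comparing your outline with it shows where yours breaks down. The crux of your inductive step --- ``the equality $f_{p_{X'}} = g_{p_{X'}}$ then forces the read-once formula for $g$ to contain the same subformula $h$ on $Y$'' --- is essentially the whole content of the theorem, and a single projection cannot deliver it. First, read-once formulas are not unique (already over $B_2$, associativity of $\oplus$ and of $\land$ gives many trees for one function), so ``contains the same subformula'' must be rephrased relative to some canonical form, and constructing such a form for general $B_l$ is itself nontrivial; the Appendix does this for $l=2$ via canonical trees and the cotree/cograph bijection, and it reconstructs the tree by combining information from \emph{all} $\binom{n}{2}$ essentiality squares at once, not by reading off one bottom gate from one projection. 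Second, you do not get to choose the constants that $p_{X'}$ assigns to $X\setminus X'$: this is the strong form of the statement, valid for an \emph{arbitrary} given $H_f$, so your ``carefully placed probe variables'' can only select which $l$-subset to inspect, not how the upper part of the formula is restricted; under an adversarial restriction the projection shows $h$ only up to negation and composition with whatever survives of the formula above it, and disentangling that is exactly the difficulty (it is why the conjecture remains open for $l \ge 6$ and why discriminatory functions are an obstruction).

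The contraction step hides a second gap that you call routine bookkeeping. After replacing the subformula on $Y$ by a fresh variable $y$, the induction hypothesis requires $\tilde g$ to agree with $\tilde f$ on some $l$-essentiality hypercube set for $\tilde f$. A hypercube for a subset containing $y$ corresponds to vectors of the original cube in which the $Y$-block is set so as to force $y=0$ or $y=1$; nothing guarantees such vectors belong to the given $H_f$, so the needed agreement of $g$ with $f$ on them is simply not part of your hypothesis. The paper's $l=2$ proof circumvents this by first pinning down the global coarse structure (the glueing) from all squares simultaneously, then reconstructing fragments, and handling linear vertices by an explicit substitution that provably converts the \emph{given} essentiality square set into one for the reduced function. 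An analogue of that conversion is precisely what your outline is missing, and without it the induction does not close.
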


Note that under conditions of the theorem,
the cardinality of $H_f$ is at most
$\binom{n}{l} \cdot 2^l = O(n^l)$,
which is polynomial in terms of $n = |X|$.

Unfortunately, for $l = 3$ and greater, there exist read-once
functions over $B_l$ which are not $l$-satisfiable.
The key problem here lies in verifying the following conjecture:

\begin{Prop}[hypercube conjecture]
Suppose $l$ is an arbitrary natural number, $l \ge 2$.
Let $f$ be a read-once function over $\B \subseteq B_l$. Then:
\vspace{-1.5ex}
\begin{description}
\itemsep=0mm
\parskip=0mm
\item[\textup{(strong form)}]
for any $l$-essentiality hypercube set $H_f$ for $f$
the values of $f$ on vectors from $H_f$
constitute a checking test for $f$ in the basis $\B$;
\item[\textup{(weak form)}]
there exists an $l$-essentiality hypercube set $H_f$ for $f$
such that the values of $f$ on vectors from $H_f$
constitute a checking test for $f$ in the basis $\B$.
\end{description}
\end{Prop}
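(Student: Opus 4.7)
The natural approach is to prove the weak form first by induction on the number of essential variables of $f$, and then derive (or at least reduce to special cases of) the strong form via a rigidity argument. Throughout, I would fix a read-once tree $T_f$ for $f$ over $\B$ and argue that any read-once function $g$ over $\B$ which agrees with $f$ on every vector of $H_f$ must coincide with $f$.

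For the weak form, let $G$ be the top gate of $T_f$, with inputs fed by subtrees on disjoint variable sets $X_1, \ldots, X_r$, where $r \le l$ since $G \in B_l$. Choose an $l$-subset $X' \subseteq X$ that contains at least one variable from each $X_i$; this is possible because $r \le l$. The key construction is to hard-wire the variables in $\bigcup_i X_i \setminus X'$ so that each subtree whose variables lie outside $X'$ collapses to a constant that is \emph{neutral} for $G$, meaning it keeps $G$ essentially dependent on all its $X'$-inputs. By read-onceness these choices are independent across sibling subtrees and, when they exist, produce an essentiality hypercube $H^{X'}$ whose projection equals $G$ composed with the ``representatives'' taken from each $X_i$. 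Including $H^{X'}$ in $H_f$ pins down $G$ uniquely within $\B$. Then, recursively, include $l$-essentiality hypercube sets for each subtree, lifted to $f$ by extending each partial assignment with neutralising values on sibling subtrees, and apply the inductive hypothesis to each subtree separately.

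For the strong form, the difficulty is that an arbitrary $H_f$ may omit hypercubes for certain $X'$, namely those $l$-subsets for which no essentiality hypercube exists at all. This omission pattern reflects the local structure of $T_f$: such an $X'$ is inadmissible precisely when no partial assignment can simultaneously keep all its variables essential, a combinatorial condition on how the $X'$-leaves relate to their common ancestors in $T_f$. The decisive step is to argue that this admissibility pattern is itself forced by the values of $f$ on $H_f$, so that any matching $g$ must exhibit the same pattern and, via the weak-form analysis, the same tree. I expect this combinatorial rigidity to be the main obstacle: it is quite conceivable that the strong form fails for some $\B \subseteq B_l$ while the weak form holds, which suggests one should first try to isolate natural subclasses of bases (for instance projection-closed ones, or bases of monotone functions) for which the strong form admits a clean proof.
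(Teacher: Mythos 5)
The statement you are addressing is presented in the paper as a \emph{conjecture}, not a theorem: the paper explicitly records that the strong form is known only for $l \le 5$ (and for finite bases containing no discriminatory functions), and that \emph{neither} form has been proved for $l \ge 6$. So no complete argument exists, and yours is not one either. The only proof the paper actually supplies is the $l = 2$ case in the Appendix, and it proceeds quite differently from your sketch: it normalises read-once formulae over $B_2$ into canonical trees, passes to a ``glueing'' which is a cotree, invokes the Corneil--Lerchs--Burlingham bijection between cotrees and cographs to show that the values of $f$ on an essentiality square set determine the glueing, and then reconstructs the $\{\land, \lor\}$-fragments and the linear vertices one by one. That is a global reconstruction-and-uniqueness argument, not a top-down induction on the output gate.

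The concrete gap in your weak-form induction is the ``neutral constant'' step. You assume the sibling subtrees outside $X'$ can always be hard-wired so that the top gate $G$ stays essentially dependent on all of its $X'$-inputs. This is exactly what fails for discriminatory gates: by definition, a discriminatory function admits a non-empty subset of inputs such that \emph{every} total assignment to that subset leaves some remaining variable fictious, and the paper notes that for $l \ge 3$ there exist read-once functions over $B_l$ that are not $l$-satisfiable --- i.e., for some $l$-subsets $X'$ no essentiality hypercube exists at all, including subsets of the kind your top-gate step requires. Your construction therefore only goes through for bases without discriminatory functions, which is precisely the special case the paper already cites as settled. For the strong form you offer only the hope of a ``combinatorial rigidity'' argument; that rigidity is the actual content of the open problem, so neither form should be presented as proved in general. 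If you want to contribute a proof here, the realistic targets are the known cases: reprove $l = 2$ via the Appendix's canonical-tree machinery, or handle a basis with no discriminatory functions, where your neutralising assignments do exist and your induction can be made to work.
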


The strong form of this conjecture for all $\B \subseteq B_l$ was proved
for $l = 2$ in paper~\cite{aavmvk11} (the proof is also presented in Appendix,
since main techniques in this area have not been available in English yet),
for $l = 3, 4$ in paper~\cite{aavpmi23} and for $l = 5$ in paper~\cite{vchb5nsk}.
It remains open for $l \ge 6$: neither form is proved for these values of $l$.
Nevertheless, it is known that the strong form of the conjecture holds
true for any finite basis containing no \df{discriminatory} functions.
A function $f$ of variables $X$ is discriminatory if there exists a non-empty
subset $X'$ of $X$ such that all projections $f_a$ for total assignments
$a$ to the variables $X'$ have at least one fictious variable
from $X \setminus X'$ (all variables from $X$ are considered essential
for $f$). All discriminatory functions have at least
$3$ essential variables; all read-once functions over bases without discriminatory
functions are $l$-satisfiable for any $l$. These results and several
other ones can also be found in~\cite{aavpmi23}.

\section{Subcube identity queries}
\label{s:id}

\subsection{Definition and problem setting}
\label{ss:def}

In this paper, we consider the problem of learning
read-once functions in the following setting.
The aim of learning is exact identification, as described in subsection~\ref{ss:exact}.
We do not impose any restrictions on the target function,
similarly to the settings of subsection~\ref{ss:ang} and contrary
to the settings of subsections~\ref{ss:subcube} and \ref{ss:check}.
That is, we do not require all its variables to be essential,
though we still consider the set $X$ of input variables known a priori.
Formally, if one is given a Boolean basis $\B$ and a set of variables $X$,
then the class $\K$ of objects being learnt is the set of all
Boolean functions of variables $X$ which are read-once over $\B$.
Available queries are membership queries, as defined in subsection~\ref{ss:ang},
and \df{subcube identity queries}, which are defined as follows.

An input to a subcube identity query is a partial
assignment $p$ to variables from $X$. The corresponding oracle
determines the induced projection $f_p$, as described in
subsection~\ref{ss:subcube}, and then checks whether $f_p \equiv b$
for either $b = 0$ or $b = 1$. If so, the oracle outputs ``yes'',
otherwise it outputs ``no'', but does not give any further information.

Note that the result of a subcube identity query, unlike that of an equivalence one,
is always a single bit. Also note that if the input projection $p$
is total, then the oracle always outputs ``yes'', so ``zero-dimensional''
queries (i.\,e., those providing total assignments) are of no use.
In fact, we could even change our definition
of the oracle so that it would output $f(p)$ if $p$ is total.
This modified definition would then generalize one of the membership oracle,
similarly to subcube parity case.

Our goal now is to determine the power of subcube identity queries.
In subsection~\ref{ss:idmod}, we demonstrate that
subcube identity queries cannot be modeled with
a polynomial number of equivalence ones. We also discuss a
connection between subcube identity queries and Valiant's
necessity and possibility ones.
In subsection~\ref{ss:fin}, we show that in some circumstances
subcube identity queries can serve as a substitute for
equivalence ones. In subsection~\ref{ss:inf}, though, we provide
an example of a Boolean basis such that this property does not hold.
A known border between polynomial and exponential
complexity of the considered problem is discussed in subsection~\ref{ss:border}.

\subsection{Some remarks on modeling}
\label{ss:idmod}

Note that if membership queries are
not available in a learning model, then subcube identity queries can turn out
significantly more powerful than classic equivalence ones:

\begin{theorem}
\label{th:monid}
The problem of exact identification for non-constant read-once functions
over the basis $\{\land, \lor\}$ can be solved by an algorithm
performing $O(n^2)$ subcube identity queries.
\end{theorem}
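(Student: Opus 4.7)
The plan is to design a recursive algorithm that identifies the canonical alternating-form formula for $f$ top-down, spending $O(n)$ subcube identity queries at each recursion level and summing to $O(n^2)$ overall.

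The underlying primitive exploits the fact that for non-constant monotone $f$, we have $f(0, \ldots, 0) = 0$ and $f(1, \ldots, 1) = 1$. For each variable $x_i$, the subcube identity query on the partial assignment with $p(x_i) = *$ and $p(x_k) = 0$ for $k \neq i$ returns ``non-constant'' iff $f(\mathbf{e}_i) = 1$, i.e., iff $\{x_i\}$ is a prime implicant of $f$; the dual query with $0$ replaced by $1$ tests whether $\{x_i\}$ is a prime implicate. These $2n$ queries yield the sets $A$ of singleton prime implicants and $B$ of singleton prime implicates. In the canonical alternating form, $A$ is exactly the set of direct leaf children of the root gate when the root is $\lor$, and $B$ is the analogous set when the root is $\land$; thus $A \neq \emptyset$ forces the root to be $\lor$, and $f$ factors as $(\bigvee_{x \in A} x) \lor g$ with $g = f|_{A = 0}$. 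A single additional query on the partial assignment that fixes $A = 0$ and leaves the rest free detects whether $g$ is identically zero; if not, the algorithm recurses on $g$ over $X \setminus A$. The case $B \neq \emptyset$ is symmetric. Each such peeling step removes at least one variable for $O(n)$ queries and contributes $O(n^2)$ in total.

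The principal technical obstacle is the balanced case $A = B = \emptyset$, in which every direct child of the canonical root is a non-trivial subformula. I would first extract a prime implicant $P$ in $n$ queries by greedy reduction: starting from $S = X$ (an implicant because $f(1, \ldots, 1) = 1$), for each $x_i \in S$ test whether $S \setminus \{x_i\}$ remains an implicant by querying the partial assignment that fixes $S \setminus \{x_i\}$ to $1$ and leaves the remaining variables as $*$, and remove $x_i$ whenever the answer stays ``constant''. A dual $n$-query reduction yields a prime implicate $Q$. The structural lemma I expect to prove asserts that, in the balanced case, $P$ lies entirely within the variable support of a single direct child of the root iff the root is $\lor$, while $Q$ then contains exactly one variable from the support of each direct root-child (and the symmetric statements hold when the root is $\land$). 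This dichotomy reveals the root's gate type; $O(n)$ additional targeted pair queries, anchored at elements of $P$ and $Q$, place each remaining variable into its correct root-child support. Proving this structural lemma --- in particular handling the case of a root with more than two children --- is the main technical hurdle. Once it is in place, the recursion produces disjoint subproblems of total variable size at most $n - 1$, and the standard accounting $T(n) \leq O(n) + \sum_i T(n_i)$ with $\sum_i n_i \leq n - 1$ yields $T(n) = O(n^2)$.
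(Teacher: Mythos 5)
Your proposal is a plan for a from-scratch top-down reconstruction, and the step you yourself flag as the ``main technical hurdle'' is both unproved and, as stated, false. Take $f = (x_1\lor x_2)(x_3\lor x_4)\;\lor\;(x_5\lor x_6)(x_7\lor x_8)$: this is the balanced case with root $\lor$ and two non-trivial children, yet the prime implicate $Q=\{x_1,x_2,x_5,x_6\}$ contains \emph{two} variables from the support of each root-child, not one. Even a corrected dichotomy (``$P$ lies inside the support of a single child iff the root is $\lor$'') cannot be applied as you describe, because the supports of the root's children are precisely what the algorithm does not know; and the follow-up step --- ``$O(n)$ additional targeted pair queries \ldots place each remaining variable into its correct root-child support'' --- is the whole reconstruction problem in disguise and is never specified. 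In particular, you give no way to decide, using subcube identity queries only (membership queries are not granted by the theorem statement), whether two variables hang under the same child of the root: the single constancy bit returned for a two-dimensional projection does not by itself distinguish $\land$ from $\lor$. So the balanced case, which your own peeling analysis shows is unavoidable, remains open, and the proof is incomplete.

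The irony is that your correct primitives already contain the only observation the paper needs. The paper does not design a new algorithm; it reduces to a known one. Since a non-constant read-once function over $\{\land,\lor\}$ is monotone with $f(\mathbf{1})=1$, one has $f(a)=1$ iff $f$ is constant on the subcube obtained by fixing to $1$ exactly the coordinates where $a$ equals $1$ and freeing the rest; hence every membership query is simulated by one subcube identity query, and the $O(n^2)$-membership-query exact identification algorithm of Angluin, Hellerstein and Karpinski for this basis transfers verbatim. Your singleton-implicant tests and your greedy implicant reduction are exactly such upward-subcube constancy tests, so you were one step away from this reduction; I would recommend abandoning the bespoke recursion and invoking it directly.
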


\begin{proof}
Note that for all non-constant read-once functions $f$ over $\{\land, \lor\}$,
the value of $f$ on the vector $\mathbf 1 = (1, \ldots, 1)$ is $1$.
One can use an algorithm from~\cite{ahk}, which uses $O(n^2)$ membership queries
to perform exact identification. Since for monotone Boolean functions
$f(a) = 1$ iff $f(a') = 1$ for all vectors $a'$ such that $a \le a' \le \mathbf 1$,
a membership query for $a$ can be simulated with a subcube identity query
for a partial assignment $p$ such that $p(x_i) = 1$ iff $a(x_i) = a_i = 1$
and $p^{-1}(0) = \emptyset$.
\end{proof}

Angluin, Hellerstein and Karpinski proved~\cite{ahk} that
the same problem cannot be solved with any polynomial number of
equivalence queries. Combined with the result of the theorem,
this means that subcube identity queries cannot be modeled with
a polynomial number of equivalence ones. Whether this holds true
in the presence of membership queries, is an open problem.
Possibility of modeling equivalence queries with a polynomial number
of membership and subcube identity ones is considered
in subsections~\ref{ss:fin} and~\ref{ss:inf}.

It must be remarked that subcube identity queries are closely related
to Valiant's necessity and possibility queries. More strictly,
a subcube identity query for a partial assignment $p$
can be modeled with one necessity and one possibility query for $p$.
Indeed, if the necessity query returns ``yes'', then the subcube identity query
should also return ``yes''. If the possibility query returns ``no'',
then the subcube identity query should still return ``yes''.
In all other cases, the subcube identity query should
return ``no''. What's more, both necessity and possibility queries
can be modeled with one subcube identity and one membership query.
If $p$ is total, then modeling is trivial (no subcube identity queries are needed).
In the other case, if a subcube identity query returns ``no'', both those queries should
return ``no''. Otherwise, a membership query for an arbitrary total extension of $p$
allows to decide which of them should return ``yes'' (the other should
return ``no'').

\subsection{Modeling equivalence queries in finite bases}
\label{ss:fin}

In this subsection, we demonstrate that under certain conditions equivalence
queries can be simulated with membership and subcube identity ones.
We describe the technique of modeling in circumstances allowing
only polynomial overhead. The key fact is stated in the following lemma:

\begin{lemma}
\label{model}
Suppose $\B$ is a finite basis for which
hypercube conjecture holds true.
Then an equivalence query for a read-once function over $\B$
can be modeled with $O(n^l)$ membership and
$O(n^l)$ subcube identity queries, where $l$ is maximum fan-in
of functions from $\B$.
\end{lemma}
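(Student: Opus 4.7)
The plan is to simulate the equivalence query for hypothesis $g$ by lifting a checking test for $g$ produced by the hypercube conjecture. Since the formula for $g$ is supplied, the essential variables $X_g \subseteq X$ of $g$ are known; I set $X_0 = X \setminus X_g$ and precompute offline an $l$-essentiality hypercube set $H_g$ for $g$ viewed as a function of $X_g$. Its cardinality is at most $\binom{|X_g|}{l}\cdot 2^l = O(n^l)$ total assignments to $X_g$. For each such vector $a$, I form the partial assignment $p_a$ with $p_a|_{X_g} = a$ and $p_a|_{X_0} = *$, issue one subcube identity query on $p_a$, and (if the answer is ``yes'') one membership query on an arbitrary total extension $y_a$ of $p_a$, thereby reading off the constant value $c_a$ taken by $f_{p_a}$. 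If every subcube identity answer is ``yes'' and every $c_a$ agrees with $g(a)$, the simulation returns ``yes''.

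Otherwise some $a$ witnesses $f \ne g$ and a counterexample must be produced. If $c_a \ne g(a)$, then $y_a$ is itself a counterexample. If the subcube identity query returned ``no'', I locate a counterexample inside the offending subcube with $O(n)$ additional membership queries, walking along a Hamming path that starts at $(a, \mathbf{0})$ and flips one coordinate of $X_0$ per step until the queried value of $f$ differs from $g(a)$; such a point exists because $f_{p_a}$ is non-constant on the subcube while $g$ is constantly $g(a)$ there. The total cost is $O(n^l)$ membership and $O(n^l)$ subcube identity queries, matching the bound in the statement.

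The correctness of a ``yes'' output relies on two observations. First, the projection $f|_{X_0 = 0}$ agrees with $g$ on every vector of $H_g$, and each essentiality hypercube of $H_g$ that forces $g$ to depend on some $x_i \in X_g$ also forces $f|_{X_0 = 0}$ to depend on $x_i$; hence $f|_{X_0 = 0}$ lies in the class to which the strong form of the hypercube conjecture applies, yielding $f|_{X_0 = 0} \equiv g$ as functions of $X_g$. Second, the ``yes'' subcube identity answers certify that $f$ is constant in $X_0$ whenever $X_g$ is pinned to some $a \in H_g$, and combined with $f$'s read-once structure over the finite basis $\B$ this must propagate to $X_0$ being globally fictious for $f$. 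The main obstacle I expect is making this propagation rigorous: one must rule out a pathological read-once $f$ whose essential dependence on some $x_i \in X_0$ is switched off on all lifted hypercubes while remaining active outside them. The most plausible route is a structural argument showing that any such $x_i$ admits an essentiality hypercube for $f$ whose $X_g$-projection lies inside some hypercube of $H_g$, producing a contradiction with the ``yes'' answer observed there.
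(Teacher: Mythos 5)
Your overall plan---lifting a checking test for $g$ obtained from the hypercube conjecture and probing it with one subcube identity query per hypercube vector plus membership queries---is the same as the paper's, but the proposal has two genuine gaps. The first is the correctness of the ``yes'' output, which you yourself flag as the ``main obstacle'' and do not resolve: the step from ``$f$ is constant on every fiber over the vectors of $H_g$'' to ``$X_0$ is globally fictious for $f$,'' for which you only sketch a structural route (an essentiality hypercube for $f$ projecting into some hypercube of $H_g$) that is not carried out. In fact you already hold exactly the data needed to close this the way the paper does, with no structural analysis of $f$ at all: for \emph{every} total assignment $a'$ to $X_0$, fiber-constancy together with the membership values $c_a = g(a)$ shows that the projection $f_{a'}$ agrees with $g'$ (the restriction of $g$ to its essential variables) on the entire checking test; applying the checking-test property to each $a'$ separately gives $f_{a'} \equiv g'$ for all $a'$, hence every variable of $X_0$ is fictious for $f$ and $f \equiv g$.

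The second gap is the counterexample-location step when a subcube identity query on $p_a$ returns ``no,'' and here your procedure actually fails. Non-constancy of $f_{p_a}$ guarantees a point of the subcube with $f \ne g(a)$, but not one lying on a single Hamming path of length $|X_0|$ starting at $(a, \mathbf{0})$: the projection may differ from the constant $g(a)$ at one vertex only (say, a projection behaving like $\overline{z}_1 z_2 \overline{z}_3 \cdots$ against the constant $0$), which your path misses, and $O(n)$ membership queries alone cannot in general locate such a vertex. The paper instead bisects with further subcube identity queries: fix one free variable of $p$ at a time, query the two resulting half-subcubes, recurse into a half whose projection is still non-constant, and if both halves are constant they are distinct constants, so a single membership query identifies the half disagreeing with $g$; this maintains the invariant that the current projection of $f$ is not equivalent to the (constant) projection of $g$ and terminates after $O(n)$ queries. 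Your path walk must be replaced by such an adaptive search for the lemma's correctness and $O(n^l)$ bound to hold.
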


\begin{proof}
Suppose $f(x_1, \ldots, x_n)$ is a target function and $g$ is
a function supplied to the equivalence oracle.
The oracle needs to check whether $f \equiv g$ and, if so, output ``yes'',
otherwise give a counterexample $y$ such that $f(y) \ne g(y)$.
\par
Note that $g$ is a read-once function over $\B$.
Denote by $g'$ a function obtained from $g$ by eliminating
all its fictious variables.
Since hypercube conjecture holds true for $\B$,
one can construct a checking test $T'$ for $g'$ containing
$O(n^l)$ answer---proof pairs.
Take an arbitrary total assignment $a$ for fictious variables of $g$
and extend all input vectors from $T'$ with $a$. The obtained set of
pairs $\langle x, g(x) \rangle$ constitutes a membership query table $T$
for $g$. We now demonstrate how $T$ can be used to simulate an equivalence
query.
\par
To reach the desired goal, we run a membership query for
each input vector $x$ contained in pairs from $T$. Denote by $b$
a result of the query. Clearly, $b = f(x)$. If for some $x$
we have $b \ne g(x)$, then we output $x$ and terminate the modeling.
Otherwise, since $T'$ is a checking test for $g'$,
we conclude that $g' \equiv f_a$, where $f_a$ is
the corresponding projection.
\par
Now we must check whether the equality $g(x) = f(x)$ holds for all $x$.
For each pair $\langle x', g(x') \rangle$ in $T'$, run a subcube identity
query for a partial assignment $p$ obtained from $x'$ by assigning $*$ to
all variables lacking values. If all such queries give ``yes'' answers,
then $g \equiv f$, so we output ``yes''. Indeed, since $T'$ is a checking test for $g'$,
in this case we know that $g'$ is equivalent to all projections of $f$
induced by partial assignments $a'$ which assign arbitrary
constant values to fictious variables of $g$. This means that all fictious
variables of $g$ are also fictious for $f$, so $f \equiv g$.
Note that in this case $O(n^l)$ membership and $O(n^l)$ subcube identity
queries are used.
\par
Suppose now that a subcube identity query for some $p$ returns ``no''.
In this case we can find a total assignment $a$ such that $f$ and $g$
disagree on $a$, and output a corresponding input vector. The procedure
performing this task is denoted by $\Bisect(p)$ and defined as follows. Let $x_i$ be
a variable such that $p(x_i) = *$. Denote by $p_b$ a partial
assignment obtained from $p$ by changing the value of $p(x_i)$ to $b$.
If such an assignment is total, then we run a membership query for one
of the total extensions of $p$ and determine the input $x$ such that
$f(x) \ne g(x)$. Otherwise, we run a subcube identity query for $p_0$.
If it returns ``no'', we forget about $p$ and run $\Bisect(p_0)$.
If the query returns ``yes'', we run another subcube identity query
for $p_1$. The answer ``no'' makes us run $\Bisect(p_1)$, and
the answer ``yes'' means that projections $f_{p_0}$ and $f_{p_1}$
disagree on all inputs and we can use a single membership query for
choosing one with property $f_{p_b} \not\equiv g_{p_b}$ and going on.
Thus, $\Bisect(p)$ always terminates and requires $O(n)$ queries
for any $n$-variable functions $f$ and $g$.
\par
Note that without loss of generality, $l \ge 1$, otherwise $\B \subseteq \{0, 1\}$
and an equivalence query can be modeled with a single membership query.
Hence, $O(n^l) + O(n) = O(n^l)$, which concludes the proof.
\end{proof}

The main result of this subsection is formulated as follows:

\begin{theorem}
\label{th:model}
Suppose $\B$ is a finite basis for which
hypercube conjecture holds true.
Then read-once functions over $\B$
are exactly identifiable with $O(n^{l + 2})$ membership
and $O(n^{l + 1})$ subcube identity queries,
where $l$ is maximum fan-in of functions from $\B$.
\end{theorem}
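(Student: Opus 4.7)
The plan is to take the learning algorithm guaranteed by Theorem~\ref{Th:diagbl} off the shelf and replace each of its equivalence queries by the simulation provided by Lemma~\ref{model}. Since $\B$ is finite with maximum fan-in~$l$, we have $\B \subseteq B_l$, so any read-once function over~$\B$ is in particular a read-once function of $l$ fan-in components and is thus exactly identifiable using $O(n^{l+2})$ membership queries and at most $n$ (proper) equivalence queries.

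Next, since by hypothesis the hypercube conjecture holds for~$\B$, Lemma~\ref{model} applies and each equivalence query can be simulated with $O(n^l)$ membership and $O(n^l)$ subcube identity queries. Running this simulation in place of each of the at most $n$ equivalence queries contributes an overhead of $n \cdot O(n^l) = O(n^{l+1})$ membership and $O(n^{l+1})$ subcube identity queries. Adding these to the $O(n^{l+2})$ membership queries already used directly by the algorithm yields a total of $O(n^{l+2}) + O(n^{l+1}) = O(n^{l+2})$ membership queries and $O(n^{l+1})$ subcube identity queries, which matches the stated bounds.

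The main obstacle is a bookkeeping one: one must verify that the hypotheses~$g$ used in the equivalence queries performed by the algorithm of Theorem~\ref{Th:diagbl} are themselves read-once over~$\B$, because Lemma~\ref{model} only guarantees a simulation of equivalence queries whose hypothesis lies in this class. This is a property of the Bshouty--Hancock--Hellerstein construction, whose intermediate hypotheses are built as candidates for the unknown target and therefore belong to the learnt class; when the target is read-once over~$\B$, the formulae produced remain read-once over~$\B$. Once this point is checked, the conclusion is a direct composition of the two cited results, so no new analytic work is required beyond the arithmetic of the query counts above.
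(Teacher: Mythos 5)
Your proposal is correct and takes essentially the same approach as the paper: the paper's proof is exactly the composition you describe, replacing each of the at most $n$ equivalence queries of the algorithm from Theorem~\ref{Th:diagbl} by the simulation of Lemma~\ref{model} and adding $O(n^{l+2}) + n\cdot O(n^l)$ membership and $n\cdot O(n^l)$ subcube identity queries. The properness point you flag is treated implicitly in the paper, since equivalence queries there are by definition proper (restricted to read-once functions over $\B$), so no further verification is carried out.
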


\begin{proof}
Applying Lemma~\ref{model} to an algorithm for exact identification
using $O(n^{l + 2})$ membership and $n$ equivalence queries
(see Theorem~\ref{Th:diagbl})
yields a desired algorithm. The number of membership queries
is $O(n^{l + 2}) + n \cdot O(n^l) = O(n^{l + 2})$,
the number of subcube identity queries is $n \cdot O(n^l) = O(n^{l + 1})$.
\end{proof}

For now, we can say that all the conditions are satisfied
in the particular cases described in the following corollary.

\begin{cor}
Suppose $\B$ is a finite basis.
Also suppose that $\B$ contains either no discriminatory functions
or no functions of fan-in $6$ and greater.
Then read-once functions over $\B$
are exactly identifiable with a polynomial number of
membership and subcube identity queries.
\end{cor}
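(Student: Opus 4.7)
The plan is to apply Theorem~\ref{th:model}, which already delivers polynomial query bounds once the hypercube conjecture is known to hold for the finite basis $\B$. The corollary therefore reduces to verifying that, under either of the two stated conditions, the hypercube conjecture holds for $\B$, and then reading off the resulting bounds.

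For the first case---$\B$ contains no discriminatory functions---I would simply appeal to the fact, already recorded at the end of subsection~\ref{ss:check}, that the strong form of the hypercube conjecture is known to hold for any finite basis without discriminatory functions. Theorem~\ref{th:model} then immediately furnishes an algorithm using polynomially many membership and subcube identity queries.

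For the second case---$\B$ contains no functions of fan-in $6$ or greater---let $l$ denote the maximum fan-in of functions in $\B$, so $l \le 5$. When $2 \le l \le 5$, the strong form of the conjecture for $\B \subseteq B_l$ has been established in the papers cited in subsection~\ref{ss:check} (\cite{aavmvk11} for $l = 2$, \cite{aavpmi23} for $l = 3, 4$, and \cite{vchb5nsk} for $l = 5$), so Theorem~\ref{th:model} applies once again and yields the desired polynomial bound.

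The only genuine care point, and hence the only place where anything needs to be said beyond citing the previous results, is the degenerate subcase $l \le 1$: here the hypercube conjecture as formulated does not apply, since it is stated for $l \ge 2$. A short separate argument takes care of it. When $\B$ consists solely of constants and unary functions, every read-once formula over $\B$ collapses to a constant or to $x_i$ or $\neg x_i$ for a single variable $x_i$, so the class of read-once functions of $n$ variables has cardinality $O(n)$ and can be pinned down with $O(n)$ membership queries---for instance, by querying $f$ on the all-zero vector and on each of the $n$ unit vectors and inspecting the pattern of answers. I do not anticipate any real obstacle; the only step that requires attention is isolating this trivial case so that the statement of the corollary is covered without exception.
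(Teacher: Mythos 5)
Your proposal is correct and follows essentially the same route as the paper: invoke the known cases of the hypercube conjecture (no discriminatory functions, or maximum fan-in $l \le 5$) and then apply Theorem~\ref{th:model}. The extra remark about $l \le 1$ is a harmless refinement (such a basis is contained in $B_2$, so the $l = 2$ case already covers it), and otherwise your argument matches the paper's.
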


\subsection{Lower bound for one infinite basis}
\label{ss:inf}

In this subsection we consider the basis of arbitrary
monotone threshold functions. Our key argument
refers to learning the functions of the basis themselves.

Note that for each natural $n \ge 2$ and for every real $s$
the following symmetric function is monotone and threshold:
\begin{equation*}
f(x_1, \ldots, x_n) = 1
\Leftrightarrow
x_1 + \ldots + x_n \ge s.
\end{equation*}
Assume $k = \lfloor n / 2 \rfloor$ and $s = k + 1$.
Increasing $k$ coefficients by $\frac{1}{2k}$
and setting $s$ to $k + \frac{1}{2}$
yields a new monotone threshold function, which
disagrees with $f$ on a single input vector
containing exactly $k$ ones.
Let $\K_n$ be the set of all $\binom{n}{k}$ such functions and $f$.

\begin{lemma}
The problem of exact identification of an unknown function from $\K_n$:
\vspace{-1.5ex}
\begin{enumerate}
\renewcommand{\theenumi}{\alph{enumi}}
\renewcommand{\labelenumi}{\textup{(}\theenumi\textup{)}}
\itemsep=0mm
\parskip=0mm
\item can be solved with a single equivalence query, but
\item cannot be solved with less than
 $\binom{n}{k}$ membership and subcube identity queries.
\end{enumerate}
\end{lemma}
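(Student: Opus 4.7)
For part~(a), the plan is a one-query algorithm that asks a proper equivalence query with hypothesis $f$ itself. This is legal because $f$ is a monotone threshold function, hence read-once over the basis at hand. If the oracle answers ``yes'', the target is $f$; otherwise a counterexample $y$ is returned. Every non-$f$ function in $\K_n$ disagrees with $f$ on a unique weight-$k$ vector --- its ``anomaly'' vector --- so the counterexample $y$ must coincide with that anomaly, and the target is then $f_y$.

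For part~(b), I will run an adversary argument, maintaining the set $S \subseteq \K_n$ of candidates still consistent with the observed transcript. The key claim to prove is that each membership or subcube identity query, answered by the adversary with the more popular answer, shrinks $|S|$ by at most one. Since $|\K_n| = \binom{n}{k} + 1$, fewer than $\binom{n}{k}$ queries then leave $|S| \ge 2$, precluding exact identification.

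Membership queries are straightforward: on a vector of weight different from $k$ all functions in $\K_n$ agree with $f$, so no candidate is eliminated; on a weight-$k$ vector $x$ only $f_x$ differs from $f$, so the adversary answers $0$ and eliminates at most the single candidate $f_x$. The main obstacle is the subcube identity case. For a partial assignment $p$ with $m$ ones, $j$ zeros and $d = n - m - j$ stars, I will rely on two observations: the projection $(f_v)_p$ coincides with $f_p$ unless $v$ is a total extension of $p$, in which case they differ at exactly one input; and $f_p$ is the threshold $|y|_1 \ge k + 1 - m$ on the star variables, so it is constant, or differs from a constant at a single point, only in very restricted regimes.

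The case analysis then runs as follows. If $f_p$ is constant, then $(f_v)_p$ is non-constant precisely when $v$ is a total extension of $p$; combining constancy of $f_p$ ($m + d \le k$ or $m \ge k + 1$) with $|v| = k$ forces $m + d = k$ and determines $v$ uniquely. If $f_p$ is non-constant, then $(f_v)_p$ is constant precisely when $f_p$ agrees with a constant everywhere except at the flip point; the threshold form of $f_p$ makes this possible only when $m = k$ (so $f_p$ takes the value $0$ only at $\mathbf{0}$, which forces $v$ to be the indicator of $p^{-1}(1)$, of weight $k$) or when $m + d = k + 1$ (so $f_p$ takes the value $1$ only at $\mathbf{1}$, which forces $|v| = k + 1$, excluding $v$ from $\K_n$). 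In every case the two possible answers split $\K_n$ into a singleton and its complement, so the adversary's majority answer shrinks $|S|$ by at most one, which yields the claimed bound.
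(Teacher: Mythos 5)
Your proposal is correct and follows essentially the same route as the paper: part (a) by an equivalence query with $f$ itself, and part (b) by an adversary argument showing that every membership or subcube identity query distinguishes at most one candidate of $\K_n$ from the rest, so fewer than $\binom{n}{k}$ queries leave at least two consistent candidates. Your explicit case analysis of $(f_v)_p$ versus the threshold projection $f_p$ is just a more detailed rendering of the paper's observation that the only informative subcube identity queries are those pinning down a unique weight-$k$ extension.
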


\begin{proof}
The first part is straightforward, because
an equivalence query for $f$ solves the problem.
We now use an adversary argument to prove the second part.
If the queries used are all membership, then the desired
is also straightforward. For subcube identity queries, we observe
that the only reasonable ones are those which supply a partial
assignment $p$ allowing a unique total extension $a$ with exactly
$k$ ones. Indeed, if this is not the case, then $p$ itself either
has at least $k + 1$ ones (or at least $n - k + 1$ zeros; in both cases all corresponding
projections $f_p$ are constant) or allows two different total
extensions with $k - 1$ and $k + 1$ ones, respectively (all corresponding
projections are non-constant). Hence, given that the target
function is taken from the set $\K_n$ defined above, each query
can only reveal its value on a single input vector containing $k$ ones.
Thus, if less than $\binom{n}{k}$ queries have been asked,
an imaginary adversary can always conceive of two suitable functions:
the first is $f$ and the second disagrees with $f$ on an input vector
which has not been inquired yet.
\end{proof}

This lemma implies that subcube identity queries do not
possess the same power as equivalence ones.
More precisely, one cannot use modeling techniques
to substitute membership and subcube identity queries
for equivalence ones with a polynomial overhead only.
We also obtain the following statement concerning
exact identification of monotone threshold functions:

\begin{theorem}
\label{th:negative}
Monotone threshold functions of $n$ variables
require at least $\binom{n}{\lfloor n / 2 \rfloor}$
membership and subcube identity queries for exact identification.
\end{theorem}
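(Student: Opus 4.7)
The plan is to obtain Theorem~\ref{th:negative} as an immediate corollary of the preceding lemma, by observing that the set $\K_n$ constructed there is a subclass of the monotone threshold functions of $n$ variables. Once this inclusion is verified, the lower bound transfers from the smaller class to the larger one by the standard superclass argument.

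First, I would verify that every element of $\K_n$ is indeed monotone and threshold. The ``base'' function $f$ is the symmetric threshold function defined by $x_1 + \ldots + x_n \ge k+1$, whose coefficients all equal $1 \ge 0$, so $f$ is manifestly monotone and threshold. Each of the $\binom{n}{k}$ perturbations is obtained from $f$ by increasing exactly $k$ of the coefficients from $1$ to $1 + \tfrac{1}{2k}$ and shifting the threshold to $k + \tfrac{1}{2}$; all coefficients remain non-negative, so every perturbation is again monotone and threshold. Consequently $\K_n$ is contained in the class of all monotone threshold functions of $n$ variables.

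Next, I would apply the superclass reduction: any deterministic decision tree that exactly identifies an arbitrary monotone threshold function of $n$ variables using only membership and subcube identity queries must in particular succeed whenever the hidden target is drawn from $\K_n$. The tree's worst-case query count over $\K_n$ therefore lower-bounds its overall worst-case query count, and by part~(b) of the preceding lemma the former is at least $\binom{n}{k} = \binom{n}{\lfloor n/2 \rfloor}$.

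There is no substantive obstacle here: the argument is essentially a one-line inclusion once the lemma is in hand, and the only point that merits checking is that the coefficient perturbations preserve non-negativity (hence monotonicity), which is immediate from the construction.
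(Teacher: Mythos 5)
Your proposal is correct and matches the paper's intended argument: the theorem is stated there precisely as a consequence of the preceding lemma, since $\K_n$ consists of monotone threshold functions and the lower bound for the subclass transfers to the whole class. Your explicit check that the perturbed coefficients $1 + \tfrac{1}{2k}$ remain non-negative (so each perturbation is still monotone and threshold) is the only detail the paper leaves implicit, and you handle it correctly.
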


Since every basis function is read-once by definition, we obtain the following
lower bound on the number of queries needed for solving our main problem:

\begin{cor}
Read-once functions of $n$ variables
over the basis of all monotone threshold functions
require at least $\binom{n}{\lfloor n / 2 \rfloor}$
membership and subcube identity queries for exact identification.
\end{cor}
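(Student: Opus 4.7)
The plan is to reduce this corollary directly to Theorem~\ref{th:negative} by a subclass argument. Let $\B$ denote the basis of all monotone threshold functions, and let $\K$ be the class of read-once functions over $\B$ of variables $X = \{x_1, \ldots, x_n\}$. The key observation is that every monotone threshold function $g$ of $n$ variables is, by definition, an element of $\B$, and hence is expressed by the trivial read-once formula consisting of a single gate $g(x_1, \ldots, x_n)$ in which each variable appears exactly once. Consequently, the whole set $\K_n$ constructed in the preceding lemma (and used in Theorem~\ref{th:negative}) is a subset of $\K$.

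Now I would argue by contradiction. Suppose some algorithm $\A$ exactly identifies every function in $\K$ using fewer than $\binom{n}{\lfloor n/2 \rfloor}$ membership and subcube identity queries in the worst case. Since $\K_n \subseteq \K$, running $\A$ on any target drawn from $\K_n$ would also identify it exactly, with the same query bound. This contradicts Theorem~\ref{th:negative}, which asserts that identification within $\K_n$ requires at least $\binom{n}{\lfloor n/2 \rfloor}$ such queries.

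There is essentially no obstacle: the only thing to verify is that the oracle's answers on queries do not change between the two settings. This is immediate because membership and subcube identity queries are defined purely in terms of the target function $f$ and do not depend on which class $\K$ or $\K_n$ we regard it as belonging to; the adversary strategy from the proof of the preceding lemma therefore remains valid verbatim when the algorithm is told only that $f \in \K$. Hence the lower bound $\binom{n}{\lfloor n/2 \rfloor}$ transfers from $\K_n$ to $\K$, which is exactly the claim of the corollary.
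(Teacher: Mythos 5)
Your proposal is correct and matches the paper's reasoning: the paper justifies the corollary with the single observation that every basis function is read-once by definition, so $\K_n$ embeds into the class of read-once functions over the basis and the lower bound of Theorem~\ref{th:negative} transfers. Your more detailed subclass/contradiction argument is just an expanded version of the same step.
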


\subsection{Polynomial vs. exponential complexity border}
\label{ss:border}

In this subsection we discuss a border between
polynomial and exponential complexity for our setting.
Observe that the following statement holds true:

\begin{claim}
No threshold function is discriminatory.
\end{claim}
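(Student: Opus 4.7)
My first step would be to reduce to the monotone case. Essentiality of $x_i$ forces $\alpha_i \neq 0$ in any representation $\alpha_1 x_1 + \cdots + \alpha_n x_n \geq \alpha_0$ of $f$, since otherwise flipping $x_i$ would never change the value. By substituting $x_i \mapsto 1 - x_i$ and adjusting the threshold to $\alpha_0 - \alpha_i$ whenever $\alpha_i < 0$, I may assume all $\alpha_i > 0$; this is a bijection of the Boolean cube and therefore preserves both essentiality and the property of being discriminatory. Write $\sigma(A) = \sum_{i \in A} \alpha_i$ for $A \subseteq X$.

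Given this reduction, fix an arbitrary non-empty $X' \subsetneq X$ and set $Y := X \setminus X'$. The task is to produce an assignment $a^* \in \{0,1\}^{X'}$ such that every $y_j \in Y$ is essential in the projection $f_{a^*}$, which is itself a monotone threshold function on $Y$ with threshold $\beta = \alpha_0 - \sigma(\{i \in X' : a^*_i = 1\})$. Recall that $y_j$ is essential in $f_{a^*}$ iff some $T \subseteq Y \setminus \{y_j\}$ satisfies $\sigma(T) \in [\beta - \alpha_j, \beta)$. The construction I would use is: let $A^* \subseteq X$ be any inclusion-maximal subset with $\sigma(A^*) < \alpha_0$, and take $a^*$ to correspond to $A^* \cap X'$. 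Maximality then yields $\sigma(A^*) + \alpha_z \geq \alpha_0$ for every $z \in X \setminus A^*$.

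The verification splits by whether $y_j \in A^*$. If $y_j \notin A^*$, the set $T_j := A^* \cap Y$ is a subset of $Y \setminus \{y_j\}$, and maximality of $A^*$ immediately places $\sigma(T_j)$ in the required window $[\beta - \alpha_j, \beta)$. The delicate case is $y_j \in A^* \cap Y$: here any essentiality witness for $y_j$ in $f$, namely $S \subseteq X \setminus \{y_j\}$ with $\sigma(S) \in [\alpha_0 - \alpha_j, \alpha_0)$, must satisfy $\sigma(S) \leq \sigma(A^*)$ by maximality, which forces $\alpha_j \geq \alpha_0 - \sigma(A^*)$. I would then construct the required witness $T_j$ by removing $y_j$ from $A^* \cap Y$ and swapping in a suitable subset of $Y \setminus A^*$ whose total weight compensates for the loss of $\alpha_j$ within the budget dictated by $\beta$. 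The main obstacle will be verifying that such a swap always exists; this is where the linear structure of threshold functions is used most crucially, and it reduces to a subset-sum argument inside $Y$ that combines the weight estimates granted by the maximality of $A^*$ with the existence of an essentiality witness for $y_j$ in $f$. Once this is established for every $y_j \in Y$, the assignment $a^*$ shows that $X'$ does not witness the discriminatory condition, and since $X'$ was arbitrary, $f$ is not discriminatory.
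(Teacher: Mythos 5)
Your reduction to the monotone case and your Case 1 are fine, but Case 2 is a genuine gap, and in fact the construction as you state it is incorrect. Taking $A^*$ to be \emph{any} inclusion-maximal set with $\sigma(A^*) < \alpha_0$ does not work: let $f = x_1 \land x_2 \land x_3$, realized with weights $(1,1,1)$ and threshold $\alpha_0 = 3$, and let $X' = \{x_1\}$. Then $A^* = \{x_2, x_3\}$ is inclusion-maximal with $\sigma(A^*) = 2 < 3$, your rule sets $x_1 = 0$, and the resulting projection $[\,x_2 + x_3 \ge 3\,]$ is identically $0$, so \emph{both} variables of $Y$ are fictious. The failure is structural, not an artifact of the swap you propose: in Case 2 your swap needs a set $S \subseteq Y \setminus A^*$ with $\alpha_0 - \sigma(A^*) \le \sigma(S) < \alpha_0 - \sigma(A^*) + \alpha_j$, and when $Y \subseteq A^*$ the only candidate is $S = \emptyset$ while the lower bound is strictly positive; indeed in that situation every $T \subseteq Y \setminus \{y_j\}$ satisfies $\sigma(T) \le \sigma(A^* \cap Y) - \alpha_j < \beta - \alpha_j$, so $y_j$ really is fictious in $f_{a^*}$. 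So the deferred ``subset-sum argument'' cannot succeed for an arbitrary maximal $A^*$; a correct version of your approach would have to choose the assignment to $X'$ far more carefully.

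The paper sidesteps the whole difficulty with a short structural lemma: for a monotone threshold function, if $x_i$ is fictious and $\alpha_j \le \alpha_i$, then $x_j$ is fictious as well (one sandwiches $G(\ldots,0,1)$ between $G(\ldots,0,0)$ and $G(\ldots,1,0)$, and $G(\ldots,1,0)$ between $G(\ldots,0,1)$ and $G(\ldots,1,1)$, using that the outer terms have equal sign status because $x_i$ is fictious). Hence the fictious variables of every projection $f_a$ onto $Y$ form a weight-downward-closed set; if every such projection had a fictious variable, then a fixed minimum-weight variable of $Y$ would be fictious in all of them, hence fictious for $f$ itself, contradicting the standing assumption that all variables of $X$ are essential. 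This rules out the discriminatory condition for every $X'$ at once, non-constructively, whereas you are trying to exhibit a single good assignment explicitly. If you want to keep the constructive route you must both repair the choice of $A^*$ and actually prove the existence of the compensating set; as it stands, the key step of your argument is missing and, for your stated construction, false.
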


\begin{proof}
Without loss of generality, take a monotone threshold function
$g(x_1, \ldots, x_n)$. Suppose that
\begin{equation*}
g(x_1, \ldots, x_n) = 1
\Leftrightarrow
G(x_1, \ldots, x_n) \ge 0,
\end{equation*}
where $G(x_1, \ldots x_n) = \alpha_1 x_1 + \ldots + \alpha_n x_n - \alpha_0$
for some non-negative real numbers $\alpha_0, \alpha_1, \ldots, \alpha_n$.
It is sufficient to show that if a variable $x_i$ is fictious for $g$,
then all the variables $x_j$ such that $\alpha_j \le \alpha_i$ are also fictious.
Indeed, once this fact is proved, one may observe that whenever
all the projections $f_a$ of a monotone threshold function $f$
of variables $X$ induced by total assignments to any fixed subset of $X$
have at least one fictious variable, they must also share a common
fictious variable, which then turns out fictious for $f$.
\par
So, suppose that $x_i$ is a fictious variable and $\alpha_j \le \alpha_i$.
Without loss of generality, assume that $i = n - 1$ and $j = n$.
Then for all $x_1, \ldots, x_{n - 2} \in \{0, 1\}$ the following inequalities
hold true:
\begin{equation*}
G(x_1, \ldots, x_{n - 2}, 0, 0) \le
G(x_1, \ldots, x_{n - 2}, 0, 1) \le
G(x_1, \ldots, x_{n - 2}, 1, 0).
\end{equation*}
Since $x_{n - 1}$ is fictious, the leftmost and the rightmost expressions above
are either both negative or both non-negative, and, obviously, so does
the expression in the center. The same reasoning also holds true for inequalities
\begin{equation*}
G(x_1, \ldots, x_{n - 2}, 0, 1) \le
G(x_1, \ldots, x_{n - 2}, 1, 0) \le
G(x_1, \ldots, x_{n - 2}, 1, 1).
\end{equation*}
This means that $g(x_1, \ldots, x_{n - 2}, x_{n - 1}, 0)$ is always equal
to $g(x_1, \ldots, x_{n - 2}, x_{n - 1}, 1)$, regardless of $x_{n - 1} \in \{0, 1\}$.
So, $x_n$ is fictious for $g$, which gives the desired.
\end{proof}

We know now that the infinite basis of arbitrary monotone threshold functions
contains no discriminatory functions, and so hypercube conjecture
holds true for an arbitrary finite subbasis. Hence, since
$\binom{n}{\lfloor n / 2 \rfloor} \sim 2^n / \sqrt{\pi n / 2}$,
we obtain the following border between polynomial and exponential
complexity of exact identification:

\begin{theorem}
\label{th:border}
The problem of exact identification of read-once functions
over the basis of arbitrary monotone threshold functions
requires an exponential number of membership and subcube identity queries
\textup{(}in terms of the number of variables\textup{)},
but the same problem for an arbitrary finite subbasis
can be solved with a polynomial number of queries.
\end{theorem}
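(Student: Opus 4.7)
The plan is to assemble pieces already established in the text, since both halves of the statement are essentially packaged by earlier results. For the lower bound, I would simply cite the corollary at the end of subsection~\ref{ss:inf}: exact identification of read-once functions over the full basis of monotone threshold functions requires at least $\binom{n}{\lfloor n/2 \rfloor}$ membership and subcube identity queries. Since $\binom{n}{\lfloor n/2 \rfloor} \sim 2^n / \sqrt{\pi n / 2}$, this gives an exponential lower bound in $n$, proving the first half.

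For the upper bound, let $\B'$ be an arbitrary finite subbasis of the basis $\B$ of all monotone threshold functions. Because $\B'$ is finite, every function in it has a finite arity, and so there is a well-defined maximum fan-in $l$ (depending on $\B'$). By the Claim just proved, no threshold function is discriminatory; in particular, $\B'$ contains no discriminatory functions. I would then invoke the fact cited at the end of subsection~\ref{ss:check}: the strong form of the hypercube conjecture is known to hold for any finite basis containing no discriminatory functions. Hence the hypercube conjecture holds for $\B'$, and Theorem~\ref{th:model} applies directly, yielding an algorithm for exact identification of read-once functions over $\B'$ using $O(n^{l+2})$ membership and $O(n^{l+1})$ subcube identity queries---polynomially many in $n$, as required.

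There is essentially no genuine obstacle here, since every ingredient---the exponential lower bound, non-discrimination of threshold functions, the hypercube conjecture for non-discriminatory finite bases, and Theorem~\ref{th:model}---has been prepared above. The one subtlety worth flagging is that the polynomial exponent depends on $l$, the maximum fan-in of $\B'$, which may grow as one considers larger and larger finite subbases of $\B$. So the statement is not that a single polynomial bound applies uniformly across all finite $\B' \subseteq \B$, but rather that for each fixed finite subbasis a polynomial upper bound exists---which is precisely how the theorem is phrased.
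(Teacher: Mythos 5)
Your proposal is correct and follows essentially the same route as the paper: the exponential lower bound is taken from the corollary of the $\K_n$ adversary argument in subsection~\ref{ss:inf}, and the polynomial upper bound for any finite subbasis follows from the Claim that no threshold function is discriminatory, the known validity of the hypercube conjecture for finite bases without discriminatory functions, and Theorem~\ref{th:model}. Your remark that the exponent depends on the maximum fan-in $l$ of the chosen finite subbasis is an accurate reading of how the theorem is meant.
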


\section{Open problems}

We conclude this paper by formulating three open problems
concerning subcube identity queries:
\begin{enumerate}
\item Theorem~\ref{th:monid} reveals that in some cases
subcube identity queries can prove more useful than
equivalence ones. Does the same property hold true when equivalence queries
are ``supported'' by membership ones?
In what circumstances can subcube identity queries be modeled
using equivalence and membership ones with a polynomial overhead only?
\item Theorem~\ref{th:model} establishes an $O(n^{l + 2})$ upper bound on
the number of queries needed for exact identification of read-once functions
over bases of fan-in $l$ and less, for $l \le 5$.
Is this bound tight in terms of $O(\cdot)$ or does there exist a better
algorithm than the one from~\cite{bhhgen} where equivalence queries are
modeled with membership and subcube identity ones?
\item To what degree may the polynomial vs. exponential border of Theorem~\ref{th:border}
be refined? In other words, what is the complexity of exact identification
of read-once functions over infinite bases of monotone threshold functions?
One may be interested, for instance, in a characterization of infinite bases
of monotone threshold functions which allow learning read-once functions
with a polynomial number of membership and subcube identity queries.
\end{enumerate}

\section{References}

\vspace{-6ex}

\appendix

\section*{Appendix}
\addcontentsline{toc}{section}{Appendix}

We shall prove hypercube conjecture for $l = 2$.
Denote by $B_2$ the basis of all functions of fan-in $2$ or less.
It is trivial to check that any read-once function over $B_2$
can be represented by a read-once formula over the basis
$B_2' = \{\land, \lor, \oplus, \overline\oplus, \neg, 1, 0\}$
(here $\oplus$ is a $XOR$ of $2$ arguments and $\overline\oplus$ is
its negation). We shall represent
formulae as rooted trees with labeled vertices. We shall place
leaves of such a tree at the bottom, and root on the top.
Any read-once formula over $B_2'$ can be transformed
so that its tree would satisfy the following conditions:
\begin{enumerate}
\renewcommand{\theenumi}{\arabic{enumi}}
\renewcommand{\labelenumi}{\theenumi)}
\parskip=0mm
\itemsep=0mm
\item any vertex labeled with ``$1$'' or ``$0$'' must be the only
 vertex in a tree;
\item all leaves are labeled with different variables or their negations
 (\df{literals});
\item all other vertices are labeled with \df{linear} ($\oplus$, $\overline\oplus$)
 or \df{non-linear} ($\land, \lor$) symbols representing corresponding functions
 of fan-in $2$ or greater;
\item adjacent vertices cannot be labeled with identical symbols
 or with different linear symbols;
\item\label{posclause} any vertex $u$ lying directly below (adjacent to) a vertex $v$ labeled
 with a linear symbol cannot be labeled with $\land$ or a negation of a variable.
\end{enumerate}
Any rooted tree satisfying five conditions above is called
a \df{canonical} tree. Any such tree represents a read-once Boolean function
over $B_2$. Conversely, any such function can be represented by
a canonical tree. The uniqueness of such a tree will be proved later.

Let $X = \{x_1, \ldots, x_n\}$ and suppose that
$f$ is a read-once Boolean function of variables $X$ over $B_2$.
An \df{essentiality square} for variables $x_i, x_j \in X$ ($i \neq j$)
is a set of four vectors differing only in $i$th and $j$th components
such that $f$ restricted to the set of these vectors depends essentially
on both $x_i$ and $x_j$. In other words, these four vectors constitute
the set of all total extensions of such a projection $p$ that
$p^{-1}(*) = \{x_i, x_j\}$ and $f_p$ does not have any fictious variables.
An \df{essentiality square set} for $f$ is any set of Boolean vectors of length $n$
containing an essentiality square for all pairs $\{x_i, x_j\} \subseteq X$.
One can easily see that for all such pairs an essentiality square exists.

A \df{glueing} of a canonical tree is a rooted tree
obtained from a canonical tree by performing the following operations:
\begin{enumerate}
\renewcommand{\theenumi}{\arabic{enumi}}
\renewcommand{\labelenumi}{\theenumi.}
\parskip=0mm
\itemsep=0mm
\item Replacing all linear symbols with $0$ and all non-linear symbols with $1$.
\item Contracting adjacent vertices labeled with $1$.
\item Replacing literals of the form $\overline x_i$ with corresponding variables $x_i$.
\end{enumerate}

We also need the following concepts from graph theory.
A graph on vertices $X$ is a \df{cograph} iff
it is reducible to an empty graph on $X$
by repeatedly complementing its connected components.
Suppose that $T$ is a rooted tree with leaves $X$
and no vertices with exactly one child.
Also suppose that non-leaf vertices of $T$
are properly coloured with $0$ and $1$
(no two adjacent vertices have the same colour).
Any tree satisfying these conditions is called a \df{cotree}.
Denote by $\phi(T)$ a graph on vertices $X$
such that $\{x_i, x_j\}$ is an edge in $\phi(T)$
iff the lowest common ancestor of $x_i$ and $x_j$
is coloured with $1$ in $T$.

\begin{claim}[\cite{clb}]
\label{co}
The mapping $\phi$ is a bijection between
the set of all cotrees with leaves $X$ and
the set of all cographs on vertices $X$.
\end{claim}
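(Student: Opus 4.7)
The plan is to establish the bijection by exhibiting an explicit inverse $\psi$ and checking both compositions by induction on $|X|$. The key structural fact is a dichotomy: any cograph $G$ with $|X| \ge 2$ satisfies either $G$ is disconnected, or its complement $\bar G$ is disconnected. This follows directly from the defining reduction: if both $G$ and $\bar G$ were connected, then complementing the sole connected component of $G$ yields $\bar G$, still connected, and the next step restores $G$; the reduction then cycles and never reaches the empty graph. Combined with the routine observations that connected components of a cograph are themselves cographs, and (applying the dichotomy to $\bar G$) that induced subgraphs on components of $\bar G$ are cographs too, this yields a canonical recursive decomposition.

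First I verify that $\phi$ always produces a cograph. Induct on the number of leaves of $T$: a single leaf gives a one-vertex graph, trivially a cograph. For the step, let $r$ be the root with children subtrees $T_1, \ldots, T_k$, where $k \ge 2$ by the no-single-child condition. If $r$ is colored $0$, an LCA argument shows $\phi(T)$ is the disjoint union of the $\phi(T_i)$; if $r$ is colored $1$, $\phi(T)$ is their complete join. In either case, $\phi(T)$ reduces to the empty graph: first reduce each $\phi(T_i)$ via induction, then reduce the resulting top level.

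Next I define $\psi$ recursively on $|X|$: a one-vertex graph maps to a leaf; a disconnected cograph $G$ with components $G_1, \ldots, G_k$ maps to a tree with root colored $0$ and children $\psi(G_1), \ldots, \psi(G_k)$; a connected cograph $G$ (necessarily with $\bar G$ disconnected by the dichotomy) maps to the dual construction --- take the components of $\bar G$ with vertex sets $V_1, \ldots, V_k$, set $G_i = G[V_i]$, and let the root be colored $1$ with children $\psi(G_i)$. The output is a valid cotree: when $r$ is colored $0$ and some $G_i$ has at least two vertices, $G_i$ is connected, so $\bar G_i$ is disconnected and the root of $\psi(G_i)$ is colored $1$, enforcing proper coloring; the case $r = 1$ is dual. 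Branching always yields $k \ge 2$, so no vertex has exactly one child.

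Finally, $\psi \circ \phi = \mathrm{id}$ and $\phi \circ \psi = \mathrm{id}$ follow by induction: the top-level shape of $\phi(T)$ (disjoint union when $r$ is $0$, complete join when $r$ is $1$) determines the top-level branching of $\psi$ uniquely, and the hypothesis handles the subtrees. The main obstacle I anticipate is bookkeeping around the dichotomy: one must ensure that the decomposition of a cograph into components (or into complements of components of $\bar G$) actually matches the grouping of leaves under the root's children, and that the coloring alternates strictly down each root-to-leaf path. Once the dichotomy is isolated and the two decomposition rules are written down, the matching between levels is forced, and uniqueness of the cotree follows as a byproduct.
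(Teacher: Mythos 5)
The paper itself gives no proof of this claim --- it is imported directly from Corneil, Lerchs and Stewart Burlingham \cite{clb} --- so there is no in-paper argument to compare yours against. Your proposal is the standard proof of the cotree--cograph correspondence and is essentially sound: the connectedness dichotomy is correctly derived from the complement-reduction definition of a cograph, $\psi$ is well defined and produces a properly coloured tree with no unary vertices (your use of the dichotomy to force the colour of each child's root is exactly the right mechanism), and the two composite identities do follow by the induction you outline. The ``matching of levels'' you flag as the main obstacle is indeed forced by the alternation of colours: a child subtree of a $0$-coloured root has a $1$-coloured root, so its image under $\phi$ is a join and hence connected, which guarantees that the connected components of $\phi(T)$ are precisely the leaf sets of the root's children (dually, with complements, for a $1$-coloured root).

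One step as written needs a patch. In showing that $\phi(T)$ is a cograph when the root is coloured $1$, you propose to ``first reduce each $\phi(T_i)$, then reduce the resulting top level''; but the join of the $\phi(T_i)$ is connected, so the reduction must begin by complementing the whole graph, after which what you need is that each $\overline{\phi(T_i)}$ --- not $\phi(T_i)$ --- is a cograph, and that is not literally your induction hypothesis. The cleanest fixes are either to prove complement-closure of cographs as a short lemma (immediate from the reduction definition), or to observe that $\overline{\phi(T_i)} = \phi(T_i')$, where $T_i'$ is $T_i$ with all internal colours flipped, and apply the induction hypothesis to $T_i'$. Likewise, under the reduction-based definition used here, the ``routine observations'' deserve a sentence each: components of a cograph are cographs because complementing components never creates edges between distinct original components, so the reduction acts on them independently; and each $G[V_i]$ in the connected case is the complement of a component of $\overline{G}$, hence a cograph once complement-closure is in hand. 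With those lines added, your proof is complete.
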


We shall use the following notation:
\begin{equation*}
x_i^\sigma =
\begin{cases}
x_i, & \text{if\ } \sigma = 1,\\
\overline x_i, & \text{if\ } \sigma = 0.
\end{cases}
\end{equation*}

\begin{lemma}[glueing lemma]
A glueing $\ddot{T}$ of an arbitrary canonical tree
for a read-once function over $B_2$ is
uniquely determined by the values of $f$
on the vectors of any essentiality square set for $f$.
\end{lemma}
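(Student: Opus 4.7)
The plan is to recover, from the values of $f$ on any essentiality square set, the linear/non-linear type of the label at the lowest common ancestor of every pair of leaves $\{x_i, x_j\}$ in a canonical tree for $f$, and then to invoke Claim~\ref{co} to reconstruct $\ddot{T}$ uniquely from this information.

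First I would observe that the projection $f_p$ induced by an essentiality square for a pair $\{x_i, x_j\}$ is a two-variable Boolean function depending essentially on both $x_i$ and $x_j$. Exactly ten such functions exist, of which two are linear (XOR and XNOR of the two literals) and eight are non-linear (conjunction or disjunction of two literals). Since the affine two-variable functions are precisely those for which the four values sum to $0$ modulo $2$, a single parity computation on the essentiality square tells us which class $f_p$ belongs to; the essentiality square set therefore encodes, for every pair, a single linear/non-linear bit.

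The heart of the argument is to show that this bit coincides with the linear/non-linear type of the label at the LCA $v$ of $x_i, x_j$ in any canonical tree for $f$. Walking from the leaf $x_i$ up to $v$, every gate encountered takes one input from the subtree containing $x_i$ and its remaining inputs from subtrees fixed to constants by $p$. Essentiality of $f_p$ in $x_i$ forces each $\land$ gate on this path to have all other inputs equal to $1$, each $\lor$ gate to have them equal to $0$, and each linear gate to merely XOR the propagating value with a constant; the same holds for $x_j$. Hence just below $v$ the propagating function on each side is a single literal, and the output of $v$ is the combination of these two literals via the symbol at $v$ (with a possible XOR shift if $v$ is linear). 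The same analysis applied to every gate on the path from $v$ up to the root shows that the overall type is preserved, because $\land$ and $\lor$ gates above $v$ must transmit the accumulated two-variable function unchanged while linear gates XOR it with a constant. Consequently $f_p$ is linear iff $v$ carries a linear label.

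Finally I would check that $\ddot{T}$ is a cotree on leaves $X$ whose internal vertices are properly $2$-coloured (colour~$0$ for originally-linear, colour~$1$ for originally-non-linear), and that the LCA of $x_i, x_j$ in $\ddot{T}$ carries the same colour as the LCA in the canonical tree: two adjacent $0$-vertices are excluded by condition~4 of canonical trees, two adjacent $1$-vertices are removed by the contraction step of the glueing, and the contraction merges only $1$-coloured vertices among themselves, so LCA colours are preserved. Therefore the essentiality square set determines the edge set of the cograph $\phi(\ddot{T})$ (the edges being the non-linear pairs), and Claim~\ref{co} yields the unique $\ddot{T}$. I expect the main obstacle to be the case analysis of the second step, specifically verifying that every propagation through a non-linear or linear gate preserves the linear/non-linear dichotomy, without any accidental simplification that could flip the type.
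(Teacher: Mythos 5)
Your proposal is correct and follows essentially the same route as the paper's proof: read off the linear/non-linear type of the projection on each essentiality square, identify it with the type of the label at the lowest common ancestor in a canonical tree, and recover $\ddot{T}$ via the cograph--cotree bijection of Claim~\ref{co}. Your propagation argument up and down through the tree, and the check that the glueing preserves lowest-common-ancestor colours, merely spell out details the paper states without proof (that the projection has the form $\left(x_i^{\sigma_i} \circ x_j^{\sigma_j}\right)^\sigma$ with $\circ$ the LCA label), so there is no substantive difference.
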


\begin{proof}
For an arbitrary canonical tree $T$, its glueing $\ddot{T}$ is unique.
Let $T_1$ be a canonical tree for $f$ and $\ddot{T}_1$ its glueing.
Note that for all $\sigma_i, \sigma_j, \sigma \in \{0, 1\}$
the linearity of a function
$\left(x_i^{\sigma_{i\mathstrut}} \circ x_j^{\sigma_{j\mathstrut}}\right)^\sigma$,
where $\circ \in \{\land, \lor, \oplus, \overline{\oplus}\}$,
coincides with the linearity of a function $x_i \circ x_j$
(in other words, with the linearity of a symbol $\circ$).
This means that an edge $\{x_i, x_j\}$ belongs to the set of edges
of the graph $\phi(\ddot{T}_1)$ iff all essentiality squares for variables $x_i, x_j$
have non-linear projections of $f$. Hence, $\ddot{T} = \ddot{T}_1$,
the glueings of all canonical trees for $f$ are identical, and
the values of $f$ on an essentiality square set uniquely determine $\phi(\ddot{T})$
and, by Claim~\ref{co}, $\ddot{T}$.
\end{proof}

A rooted subtree $T'$ of a canonical tree $T$
is called a \df{fragment} of a canonical tree $T$ iff it
satisfies the following conditions:
\begin{enumerate}
\renewcommand{\theenumi}{\arabic{enumi}}
\renewcommand{\labelenumi}{\theenumi)}
\parskip=0mm
\itemsep=0mm
\item $T'$ has at least one non-leaf vertex;
\item either the root of $T'$ is the root of $T$,
 or the vertex adjacent to the root of $T'$ and lying above it
 is linear;
\item all vertices of $T'$ lie in $T$ below the root of $T'$;
\item all linear vertices from $T$ that are also in $T'$ are leaves in $T'$;
\item all non-linear vertices from $T$ that are also in $T'$ are not leaves in $T'$;
 all their children are in $T'$.
\end{enumerate}

\begin{lemma}[fragment lemma]
Suppose that $f$ is a read-once function over $B_2$ and
one knows a glueing $\ddot{T}$ of a canonical tree $T$.
Also suppose that all children of a vertex $v$ of $\ddot{T}$, which is
labeled with $1$ and corresponds to a fragment $T'$, are leaves in $\ddot{T}$.
Then one can unambiguously reconstruct $T'$ using the values of $f$
on the vectors from an essentiality square set for $f$.
\end{lemma}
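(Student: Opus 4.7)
The plan is to recover $T'$ by determining, from the essentiality square set $H_f$, the set of variables appearing in $T'$, the label $\land$ or $\lor$ of the lowest common ancestor of every pair of such variables inside $T'$, and the signs of the literal leaves; these data specify $T'$ uniquely via the cograph--cotree correspondence of Claim~\ref{co}. By the hypothesis that every child of $v$ in $\ddot{T}$ is a leaf and by the construction of the glueing, the variables appearing in $T'$ are exactly those labeling the children of $v$ in $\ddot{T}$; call this set $X'$.

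For each pair $\{x_i,x_j\}\subseteq X'$, take the essentiality square in $H_f$ for $\{x_i,x_j\}$ and compute the projection $g_{ij}$ of $f$ on it. Since $f_p$ depends essentially on both variables, $g_{ij}$ is one of the eight non-constant non-linear two-variable functions. The key observation is that $g_{ij}$ equals $x_i^{\tau_i}\circ_{ij} x_j^{\tau_j}$ or its negation, where $\circ_{ij}\in\{\land,\lor\}$ is the label of the LCA of $x_i,x_j$ inside $T'$ and $\tau_i,\tau_j$ are the signs of the corresponding literal leaves of $T'$. Indeed, for $g_{ij}$ to depend essentially on both variables, the other $T'$-variables that the essentiality square fixes must render the two paths from $x_i,x_j$ up to the LCA, and the path from the LCA up to the root of $T'$, all transparent (not short-circuited); above the root of $T'$ only linear vertices occur, and with constant non-$T'$ inputs each such vertex merely acts as the identity or the negation on the propagated value.

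To recover the tree structure of $T'$, note that since adjacent internal vertices of $T'$ carry distinct labels (condition~4 of canonicity), coloring $\land$ as~$1$ and $\lor$ as~$0$ turns $T'$ into a cotree, whose leaf cograph $\phi(T')$ determines $T'$ completely by Claim~\ref{co}. Define a graph $G$ on $X'$ by placing the edge $\{x_i,x_j\}$ iff $g_{ij}$ has the ``$\land$-pattern'' (exactly one input mapped to~$1$). Each pair admits two readings of $g_{ij}$ related by De~Morgan duality, and there is a unique global reading that assigns every variable $x_i$ a single consistent sign $\tau_i$ across all pairs involving it; under this reading $G$ is either $\phi(T')$ or its complement, so applying $\phi^{-1}$ yields the tree of $T'$ up to simultaneously interchanging the labels $\land$ and $\lor$ and negating all literal signs.

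It remains to eliminate this last De~Morgan ambiguity. If $v$ is the root of $\ddot{T}$, then the hypothesis of the lemma forces $T = T'$, so $f$ itself equals $T'$, and consulting the known values of $f$ on any essentiality square distinguishes $T'$ from $\overline{T'}$. Otherwise the parent of $v$ in $\ddot{T}$ is labeled~$0$, which by the construction of the glueing means that the root of $T'$ lies directly below a linear vertex in $T$; by condition~\ref{posclause} this root must then be $\lor$ and any variable-leaf child of it must be non-negated, which selects a unique candidate. The main obstacle is precisely this De~Morgan duality: a single essentiality-square projection cannot separate an $\land$-LCA from an $\lor$-LCA, and one has to argue carefully that global sign consistency together with the location of $v$ in $\ddot{T}$ and condition~\ref{posclause} pins down a unique canonical~$T'$.
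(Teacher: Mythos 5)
Your overall route coincides with the paper's: recover the literal signs up to a global complementation, recover the label of each lowest common ancestor, assemble $T'$ via Claim~\ref{co}, and kill the remaining De~Morgan duality using clause~\ref{posclause} (or a known value of $f$ when the fragment is the whole tree). However, one step as written would fail: the graph $G$ you define by the raw ``$\land$-pattern'' of $g_{ij}$ is not a function of $T'$ alone. The projection of $f$ onto an essentiality square has the form $\left(x_i^{\sigma_i}\circ x_j^{\sigma_j}\right)^{\sigma}$, where the outer negation $\sigma$ is contributed by the linear vertices above the root of $T'$ and depends on the constants that this particular square assigns to the variables outside it; an essentiality square set is free to use different constants for different pairs, so $\sigma$ varies from pair to pair. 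Consequently ``$g_{ij}$ has the $\land$-pattern'' means ``the LCA is $\land$'' XOR ``$\sigma=0$'', and $G$ need not be $\phi(T')$ or its complement. For instance, for $f=(\overline{x}_1\lor\overline{x}_2\lor\overline{x}_3)\oplus x_4$ a square for $\{x_1,x_2\}$ with $x_4=1$ yields the $\land$-pattern $x_1\land x_2$ while a square for $\{x_1,x_3\}$ with $x_4=0$ yields an $\lor$-pattern, although all LCAs inside the fragment coincide. The repair is what your own ``reading'' language suggests and what the paper does: first fix one of the two globally consistent sign vectors (there are exactly two, complementary, not one), and then read off each LCA label from which of the two De~Morgan readings of $g_{ij}$ agrees with those signs --- equivalently, compare the unique distinguished input $\delta$ of $g_{ij}$ with $(\sigma_i,\sigma_j)$: equality means $\land$, the complementary vector means $\lor$. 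This rule is insensitive to the unknown outer negation, and only with it does the resulting graph become $\phi(T')$ or its complement.

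In the final disambiguation, the part that works is exactly the paper's argument: the two candidates are dual, and if $v$ is not the root then the root of $T'$ sits directly below a linear vertex of $T$, so by clause~\ref{posclause} it cannot be $\land$, which selects one candidate; if $v$ is the root, $T=T'$ and any known value of $f$ decides. But your additional assertion that every variable-leaf child of the fragment root must be non-negated does not follow from clause~\ref{posclause}, which constrains only the vertex immediately below a linear vertex, and it is in fact false: $(\overline{x}_1\lor\overline{x}_2)\oplus x_3$ is canonical and its fragment root has negated leaves, so applying that extra test would reject the correct candidate. Drop it; the root-label constraint alone suffices.
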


\begin{proof}
The reconstruction of $T'$ can be performed in two steps.
At first, we shall reconstruct two variants of leaves' labels.
Consider the leaves labeled with literals
$x_i^{\sigma_{i\mathstrut}}$ and $x_j^{\sigma_{j\mathstrut}}$
($\sigma_i$ and $\sigma_j$ are unknown).
Since Boolean conjunction and disjunction are both monotone,
all projections of $f$ onto any essentiality square for $x_i$ and $x_j$
have the form
$\left(x_i^{\sigma_{i\mathstrut}} \circ x_j^{\sigma_{j\mathstrut}}\right)^\sigma$,
where $\circ \in \{\land, \lor\}$ and $\sigma \in \{0, 1\}$.
Hence, if such a projection is monotone or antimonotone in both its variables,
then $\sigma_i = \sigma_j$, otherwise $\sigma_i \ne \sigma_j$.
This means that the values of $f$ on the vectors from an essentiality
square set determine two possible vectors of $\sigma$'s for leaves of $T'$,
which differ in every single component.
\par
Take any of these vectors and assume that it is the correct one.
Now we can reconstruct the whole unknown fragment.
Consider two leaves of $T'$ labeled with
$x_i^{\sigma_{i\mathstrut}}$ and $x_j^{\sigma_{j\mathstrut}}$,
respectively. Determine the label $\circ \in \{\land, \lor\}$
of the lowest common ancestor of these leaves in $T$. We shall use
the values of $f$ on the corresponding essentiality square.
The associated projection is a conjunction or a disjunction of
$x_i^{\sigma_{i\mathstrut}}$ and $x_j^{\sigma_{j\mathstrut}}$
(or its negation), so there exists such a Boolean vector $\delta = (\delta_1, \delta_2)$
that the values of this projection on all vectors $\gamma \ne \delta$
differ from its value on $\delta$. If the lowest common ancestor
of the considered leaves is labeled with $\land$, it follows that
$\delta = (\sigma_i, \sigma_j)$. Otherwise, if the lowest common ancestor
is labeled with $\lor$, it follows that
$\delta = (\overline\sigma_i, \overline\sigma_j)$.
This means that the unknown fragment can be reconstructed
with the technique of Claim~\ref{co}.
\par
Note that the inverse vector of $\sigma$'s corresponds
to the same fragment tree with dual labels (symbols $\land$ and
$\lor$ are said to be \df{dual} to each other). By De Morgan's laws,
functions represented by these trees are each other's negation.
If the root of $T'$ is also a root of $T$, then
the right tree can be chosen using the value of $f$ on any input.
If this is not the case, the root of $T'$, according to
the clause~\ref{posclause} of the definition of a canonical tree,
cannot be labeled with $\land$, which eliminates one of the variants.
\end{proof}

\begin{theorem}
Let $f$ be a read-once function over $B_2$
and $M_f$ an essentiality square set for $f$.
Suppose that one knows the values of $f$ on all vectors from $M_f$.
Then one can reconstruct a unique canonical tree for $f$.
\end{theorem}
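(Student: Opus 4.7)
The plan is to combine the glueing lemma and the fragment lemma via an inductive bottom-up traversal of $\ddot T$. First, I would apply the glueing lemma to the values of $f$ on $M_f$ in order to recover the glueing $\ddot T$ uniquely. This gives the scaffolding of $T$: the $0$-vertices indicate linear vertices of $T$, and each $1$-vertex corresponds to a fragment of $T$.

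Next, I would reconstruct $T$ by induction on the height of $\ddot T$. At each stage, I would pick a $1$-vertex $v$ of maximum depth whose children in $\ddot T$ are either variable leaves or $0$-vertices whose subtrees in $\ddot T$ have already been handled. In the base case, when all children of $v$ are leaves of $\ddot T$, the fragment lemma directly reconstructs the fragment at $v$. For the inductive step, I would invoke a recursive reduction: having reconstructed each subtree of $T$ below the $0$-vertex children $w$ of $v$, I already know the function $h_w$ computed there. I would introduce super-variables $y_w$ and consider the function $f'$ obtained from $f$ by substituting $y_w$ for the variables in the $w$-subtree. The canonical tree for $f'$ coincides with $T$ above these subtrees but carries the leaves $y_w$ in their place, and its glueing is obtained from $\ddot T$ by collapsing each $0$-vertex $w$ together with its subtree to a single leaf $y_w$. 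To apply the fragment lemma (or the inductive hypothesis) to $f'$, I need the values of $f'$ on an essentiality square set $M_{f'}$, and any such essentiality square can be extracted from $M_f$: for a pair involving a super-variable $y_w$, I would use an essentiality square in $M_f$ for a pair $\{x_i, x_k\}$ with $x_k$ a variable in the $w$-subtree whose toggling flips $h_w$, which is possible because $x_k$ is essential in $h_w$.

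Once all fragments are reconstructed, I would determine the label ($\oplus$ vs.\ $\overline\oplus$) of each linear vertex of $T$ by evaluating $f$ on a suitable input from $M_f$ and comparing with the two candidate values, which are complementary. The root-negation ambiguity already addressed at the end of the fragment lemma's proof is resolved identically here: clause~\ref{posclause} disambiguates when the fragment is internal, and one additional value of $f$ suffices when the fragment contains the root. Uniqueness of the resulting canonical tree follows from the uniqueness assertions in both lemmas combined with the determinism of the reconstruction procedure.

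The main obstacle I expect is verifying the inductive reduction rigorously: namely, that super-variable substitutions preserve read-once-ness and that $M_f$ really does supply, for every variable pair in $f'$, the information of an essentiality square for $f'$. One must check that fixing the variables in some $w$-subtrees to attain $y_w \in \{0,1\}$ does not accidentally render essential variables elsewhere fictious, and that the LCA of any two variables inside the combined scope of $v$ (including variables buried under several reconstructed $0$-vertices) still lies in the fragment at $v$, so that the essentiality-square projections retain the pure two-literal form exploited by the fragment lemma's argument via Claim~\ref{co}. Both facts follow from the structure of $\ddot T$ and the non-constancy of each $h_w$, but the accompanying bookkeeping is the technical crux.
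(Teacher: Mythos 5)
Your proposal matches the paper's own proof in essence: the paper likewise first recovers $\ddot T$ via the glueing lemma and then proceeds bottom-up, collapsing each already-reconstructed linear ($0$-labeled) subtree into a fresh variable (the paper uses a single variable $x_t$ equal to the XOR of the collapsed children, where you use per-child super-variables $y_w$), asserting---just as tersely as you do---that this substitution turns $M_f$ into an essentiality square set for the reduced function, re-applying the fragment lemma, and resolving the root with one extra value of $f$. One small correction to your last step: for an \emph{internal} linear vertex the two global candidates for $f$ are not complementary, so the choice between $\oplus$ and $\overline\oplus$ should be read off from the polarity of the super-variable literal that the fragment lemma produces in the parent fragment (which is exactly how the paper decides it, via whether the leaf $x_t$ comes out as $\overline x_t$), rather than by a separate value comparison; your reconstruction already yields that polarity, so no new ingredient is needed.
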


\begin{proof}
At first, one can reconstruct a unique glueing $\ddot{T}$
of a canonical tree $T$ for $f$, using glueing lemma.
Then for each vertex in $\ddot{T}$ labeled with $1$
and having no descendants except for leaves,
one can reconstruct an associated fragment of $T$,
using fragment lemma. Suppose that $\ddot{T}$ contains a vertex $v$
labeled with $0$ such that all its descendants are leaves
(labeled with $x_{i_1}, \ldots x_{i_p}$) and
vertices labeled with $1$ which have already been considered
(with corresponding subtrees representing functions
$f_{j_1}, \ldots, f_{j_q}$). Also suppose that $v$ has not been
considered yet. Perform a substitution
$x_t = x_{i_1} \oplus \ldots \oplus x_{i_p} \oplus
       f_{j_1} \oplus \ldots \oplus f_{j_q}$,
where $t$ is a new natural number, unique for each $v$.
Such a substitution transforms an essentiality square set for $f$
into an essentiality square set for a new function obtained from $f$.
After that, one can continue the reconstruction of
a canonical tree for $f$. If the following steps
prove that the leaf corresponding to $x_t$ should be labeled
with $\overline x_t$, then the associated vertex in $T$
is labeled with $\overline\oplus$, otherwise it is labeled
with $\oplus$. If $v$ is a root vertex in $\ddot{T}$,
then the label of a root vertex in $T$ is determined
by the value of $f$ on any single input.
\end{proof}

\begin{cor}
Every function $f$ which is read-once over $B_2$
has a unique canonical tree.
\end{cor}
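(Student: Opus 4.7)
The plan is to derive the corollary directly from the preceding theorem; the substantive work has already been done in the glueing and fragment lemmas, so what remains is just to package them. First I would observe that any read-once function $f$ over $B_2$ admits at least one essentiality square set $M_f$, since for every pair of distinct variables of $f$ an essentiality square exists (this was noted just after the definition). By the theorem, from the values of $f$ on $M_f$ one can reconstruct a canonical tree $T^\star$ representing $f$.

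To deduce uniqueness, I would let $T$ be an arbitrary canonical tree for $f$ and argue $T = T^\star$, mirroring the construction inside the theorem's proof. The glueing lemma asserts that the glueing of any canonical tree for $f$ is determined by the values of $f$ on $M_f$, so in particular $\ddot T = \ddot T^\star$. The fragment lemma then forces each fragment of $T$ hanging below a currently bottom-most vertex of $\ddot T$ labelled $1$ to coincide with the corresponding fragment of $T^\star$. The inductive XOR-substitution step used in the proof of the theorem produces a new read-once function whose essentiality square set is still fixed by the original data, and the same two lemmas apply to its canonical tree; iterating up the tree level by level yields $T = T^\star$. Since $T$ was arbitrary, the canonical tree of $f$ is unique.

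The main obstacle is essentially non-existent: the reconstruction theorem already does the work. The only subtlety to check is that the XOR-substitution step in the inductive argument is intrinsic to $f$ and not to a specific choice of canonical tree — that is, that the new variable $x_t$ replaces the same subformula (up to logical equivalence) in any canonical tree for $f$, so the induction hypothesis at the next level is applied to a well-defined function. This follows because the subtrees being collapsed are precisely those pinned down by the fragment lemma at the previous stage, so they are the same in $T$ and $T^\star$.
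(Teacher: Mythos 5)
Your proposal is correct and follows the same route as the paper, which states this corollary without a separate proof as an immediate consequence of the reconstruction theorem: since the values of $f$ on $M_f$ are determined by $f$ alone and the glueing and fragment lemmas constrain \emph{every} canonical tree for $f$ (not just the reconstructed one), any canonical tree must coincide with the reconstructed $T^\star$. Your extra remark that the XOR-substitution step is intrinsic to $f$ is a reasonable bit of care, but it is already implicit in the theorem's proof.
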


\begin{cor}
Suppose $f$ is a read-once function over $B_2$
and $M_f$ is an essentiality square set for $f$.
Then $T_f = \{\langle x, f(x) \rangle \colon x \in M_f\}$
is a checking test for $f$ in the basis $B_2$,
and its cardinality $|T_f|$ is less or equal to
$4 \binom{n}{2} = O(n^2)$.
\end{cor}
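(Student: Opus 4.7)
My plan is to derive the corollary directly from the preceding theorem, which asserts that from the values of $f$ on an essentiality square set $M_f$ one can reconstruct a \emph{unique} canonical tree for $f$. The checking test property amounts to showing that no other read-once function $g$ over $B_2$ can produce the same table $T_f$. The cardinality bound is then just a matter of counting pairs and squares.

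First I would verify the following key auxiliary claim: if $g$ is any read-once function over $B_2$ such that $g(x) = f(x)$ for every $x \in M_f$, then the very same set $M_f$ is an essentiality square set for $g$. This is immediate from the definition, because whether a fixed quadruple of vectors differing only in the $i$th and $j$th coordinates forms an essentiality square for a function depends solely on the four values of that function on those vectors, and by hypothesis these values coincide for $f$ and $g$. So for every pair $\{x_i, x_j\}$, an essentiality square for $f$ contained in $M_f$ is also an essentiality square for $g$.

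Having established this, I would apply the preceding reconstruction theorem to both $f$ and $g$ with the same essentiality square set $M_f$. The reconstruction procedure takes as input only the values of the function on $M_f$, and it produces a unique canonical tree. Since $T_f = T_g$ as tables, the reconstruction yields the same canonical tree for $f$ as for $g$. By the preceding corollary (uniqueness of the canonical tree for a given read-once function over $B_2$), this forces $f \equiv g$. Thus any hypothesis $g$ consistent with $T_f$ must coincide with $f$, so $T_f$ is indeed a checking test in the basis $B_2$.

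For the cardinality bound, observe that $M_f$ can be chosen as the union, over all unordered pairs $\{x_i, x_j\} \subseteq X$, of one essentiality square for that pair; existence is guaranteed by the discussion preceding the glueing lemma. Each such square consists of exactly $4$ vectors, and there are $\binom{n}{2}$ pairs, giving $|M_f| \le 4\binom{n}{2} = O(n^2)$. The only subtle point in the whole argument is the auxiliary claim about $g$ inheriting $M_f$ as an essentiality square set, but this reduces to the trivial observation that essentiality of variables in a projection is a local property of the function's values on that projection.
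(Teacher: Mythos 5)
Your proof is correct and follows the route the paper intends: the corollary is stated without an explicit proof precisely because it is meant to fall out of the preceding reconstruction theorem and the uniqueness corollary in the way you describe. You also correctly identify and fill in the one step the paper leaves implicit --- that any read-once $g$ over $B_2$ agreeing with $f$ on $M_f$ inherits $M_f$ as an essentiality square set, since being an essentiality square is a local property of the four function values --- which is exactly what licenses running the reconstruction on $g$ with the same data.
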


\end{document}